\def\Statusstring{Technical Report\\21 December 2011}
\newcommand{\vs}{1.5pt}  
\newcommand{\et}{1pt}  
\gdef\@punct{.\ \ }  
\def\@sect#1#2#3#4#5#6[#7]#8{%
  \ifnum #2>\c@secnumdepth
     \def\@svsec{}
  \else
     \refstepcounter{#1}\edef\@svsec{%
     \ifnum #2>0{{\csname the#1\endcsname}}.\fi%
    \hskip .5em}
  \fi
  \@tempskipa #5\relax
  \ifdim \@tempskipa>\z@
     \begingroup #6\relax
       \@hangfrom{\hskip #3\relax\@svsec}{\interlinepenalty \@M #8\par}
     \endgroup
     \csname #1mark\endcsname{#7}
     \addcontentsline{toc}{#1}{\ifnum #2>\c@secnumdepth\else
          \protect\numberline{\csname the#1\endcsname}\fi#7}
  \else
     \def\@svsechd{#6\hskip #3\@svsec #8\@punct\csname
#1mark\endcsname{#7}
     \addcontentsline{toc}{#1}{\ifnum #2>\c@secnumdepth \else
          \protect\numberline{\csname the#1\endcsname}\fi#7}}
  \fi
  \@xsect{#5}}
\def\@ssect#1#2#3#4#5{\@tempskipa #3\relax
  \ifdim \@tempskipa>\z@
     \begingroup #4\@hangfrom{\hskip #1}{\interlinepenalty \@M
#5\par}\endgroup
  \else \def\@svsechd{#4\hskip #1\relax #5\@punct}\fi
  \@xsect{#3}}
\def\qed{\hskip 3pt \hbox{\vrule width4pt depth2pt height6pt}}
\newtheorem{Lemma}{Lemma}
\newtheorem{Theorem}[Lemma]{Theorem}
\newtheorem{Corollary}[Lemma]{Corollary}
\newtheorem{Definition}[Lemma]{Definition}
\newcommand{\diam}{\mathop{\mathrm{diam}}\nolimits}
\newcommand{\dist}{\mathop{\mathrm{dist}}\nolimits}
\newcommand{\inv}{\mathop{\mathrm{inv}}\nolimits}
\newcommand{\Fix}{\mathop{\mathrm{Fix}}\nolimits}
\begin{document}

\title{Diameter of Cayley graphs of permutation groups generated by transposition trees}
\author{Ashwin Ganesan%
  \thanks{Department of Mathematics, Amrita School of Engineering, Amrita Vishwa Vidyapeetham, Amritanagar, Coimbatore - 641~112, India.
  Email: \texttt{ashwin.ganesan@gmail.com}. }}
\date{}

\maketitle

\vspace{-6.5cm}
\begin{flushright}
  \texttt{\Statusstring}\\[1cm]
\end{flushright}
\vspace{+4.3cm}

\begin{abstract}
Let $\Gamma$ be a Cayley graph of the permutation group generated by a transposition tree $T$ on $n$ vertices. In an oft-cited paper \cite{Akers:Krishnamurthy:1989} (see also \cite{Hahn:Sabidussi:1997}), it is shown that the diameter of the Cayley graph $\Gamma$ is bounded as
$$\diam(\Gamma)~\le~\max_{\pi \in S_n}\left\{c(\pi)-n+\sum_{i=1}^n \dist_T(i,\pi(i))\right\},$$
where the maximization is over all permutations $\pi$, $c(\pi)$ denotes the number of cycles in $\pi$, and $\dist_T$ is the distance function in $T$. In this work, we first assess the performance (the sharpness and strictness) of this upper bound. We show that the upper bound is sharp for all trees of maximum diameter and also for all trees of minimum diameter, and we exhibit some families of trees
for which the bound is strict.  We then show that for every $n$, there exists a tree on $n$ vertices, such that the difference between the upper bound and the true diameter value is at least $n-4$.

Observe that evaluating this upper bound requires on the order of $n!$ (times a polynomial) computations.  We provide an algorithm that obtains an estimate of the diameter, but which requires only on the order of (polynomial in) $n$ computations; furthermore, the value obtained by our algorithm is less than or equal to the previously known diameter upper bound.  This result is possible because our algorithm works directly with the transposition tree on $n$ vertices and does not require examining any of the permutations (only the proof requires examining the permutations).  For all families of trees examined so far, the value $\beta$ computed by our algorithm happens to also be an upper bound on the diameter, i.e.
 $$\diam(\Gamma)~\le~\beta~\le~\max_{\pi \in S_n}\left\{c(\pi)-n+\sum_{i=1}^n \dist_T(i,\pi(i))\right\}.$$
\end{abstract}

\bigskip
\noindent\textbf{Index terms} --- Cayley graphs; transposition trees; algorithms; diameter; permutations; permutation groups; interconnection networks; parallel and distributed computing.
\bigskip


\newpage
\tableofcontents

\section{Introduction}

A problem of practical and theoretical interest is to determine or estimate the diameter of various families of Cayley networks. In the field of parallel and distributed computing, Cayley graphs generated by transposition trees were studied in the oft-cited paper by Akers and Krishnamurthy \cite{Akers:Krishnamurthy:1989}, where it was shown that the diameter of some families of Cayley graphs is sublogarithmic in the number of vertices.  This is one of the main reasons such Cayley graphs were considered a superior alternative to hypercubes for consideration as the topology of interconnection networks  \cite{Lakshmivarahan:etal:1993}.  Since then, much work has been done in understanding the use of Cayley graphs and algebraic methods in networks \cite{Heydemann:1997}.

Cayley graphs were first introduced in 1878 by Arthur Cayley to study groups \cite{Cayley:1878}.  Let $G$ be a group generated by a set of elements $S$.  The Cayley graph (or Cayley diagram) of $G$ with respect to the set of generators $S$ is a directed graph with vertex set $G$ and with an arc from vertex $g$ to vertex $gs$ iff $g \in G$ and $s \in S$ (cf. \cite[Chapter VIII.1]{Bollobas:1998}, \cite{Biggs:1993}).  Such graphs possess a certain degree of symmetry. In a symmetric network, the topology of the network looks the same from every node.  It is now known that many families of symmetric networks possess additional desirable properties such as optimal fault-tolerance \cite{Alspach:1992}, \cite{Akers:Krishnamurthy:1987}, algorithmic efficiency \cite{Annexstein:etal:1990}, optimal gossiping protocols \cite{Bermond:etal:1996} \cite{Zhou:2009}, and optimal routing algorithms \cite{Chen:etal:2006}, among others, and so have been widely studied in the fields of interconnection networks and distributed computing and networking.  New topologies continue to be proposed and assessed \cite{Parhami:2005}.  

The diameter of a network represents the maximum communication delay between two nodes in the network. The design and performance of bounds or algorithms that determine or estimate the diameter of various families of Cayley graphs of permutation groups is of much theoretical and practical interest. This diameter problem is difficult even for the simple case when the symmetric group is generated by cyclically adjacent transpositions (i.e. a set of transpositions whose transposition graph is a Hamilton cycle) \cite{Jerrum:1985}. The problem of determining the diameter of a Cayley network is the same as that of determining the diameter of the corresponding group for a given set of generators; the latter quantity is defined to be the minimum length of an expression for a group element in terms of the generators, maximized over all group elements. 

Let $S$ denote a set of transpositions of $\{1,2,\ldots,n\}$.  We can describe $S$ by its transposition graph $T(S)$, which is a simple, undirected graph whose vertex set is $\{1,2,\ldots,n\}$ and with vertices $i$ and $j$ being adjacent whenever $(i,j) \in S$.  Construct the Cayley graph $\Gamma$ whose vertex set is the permutation group generated by $S$ and with two vertices $g$ and $h$ being adjacent in $\Gamma$ if and only if there exists an $s \in S$ such that $gs=h$.  A natural problem is to understand how properties of the Cayley graph $\Gamma$ depend on those of the underlying transposition graph $T(S)$. For example, \cite[p. 53]{Godsil:Royle:2001} and \cite{Feng:2006} express the automorphism group of the Cayley graph in terms of that of the transposition tree.  In \cite{Akers:Krishnamurthy:1989}, \cite[p.188]{Heydemann:1997}, the diameter of the Cayley graph is expressed in terms of distances in the underlying transposition tree.

It is well known that a given set of transpositions on $\{1,2,\ldots,n\}$ generates the entire symmetric group on $n$ letters iff the transposition graph contains a spanning tree \cite{Berge:1971},\cite{Godsil:Royle:2001}.  A transposition graph which is a tree is called a transposition tree.   Throughout this work, we let $T=T(S)$ denote a given transposition tree corresponding to the set of transpositions $S$; we often use the same symbol $T$ to represent both the graph of the tree as well as a set of transpositions, and the notation $(i,j)$ is used to represents both an edge of $T$ as well as the corresponding transposition.  Since each element of $S$ is its own inverse, we have that  the Cayley graph $\Gamma$ generated by $T$ is a simple, undirected graph.  Let $\dist_G(u,v)$ denote the distance between vertices $u$ and $v$ in an undirected graph $G$, and let $\diam(G)$ denote the diameter of $G$.  Note that $\dist_{\Gamma}(\pi,\sigma) = \dist_{\Gamma}(I,\pi^{-1}\sigma)$, where $I$ denotes the identity permutation.  Thus, the diameter of $\Gamma$ is the maximum of $\dist_{\Gamma}(I,\pi)$ over $\pi \in S_n$.

The oft-cited paper \cite{Akers:Krishnamurthy:1989} gives an upper bound on the diameter of Cayley graphs generated by transposition trees.  When a bound or algorithm is proposed in the literature, it is often of interest to determine how far away the bound can be from the true value in the worst case, and to obtain more efficient algorithms for estimating the parameters.  The purpose of this work is to assess the performance of the previously known upper bound on the diameter and to propose a new algorithm to estimate the diameter of Cayley graphs for any given transposition tree.

\subsection{Notational preliminaries and terminology}

Let $S_n$ denote the symmetric group on $[n]:=\{1,2,\ldots,n\}$.  We represent a permutation $\pi \in S_n$ as an arrangement of $[n]$, as in $[\pi(1),\pi(2),\ldots,\pi(n)]$ or in cycle notation.  $c(\pi)$ denotes the number of cycles in $\pi$, including cycles of length 1.  Also, $\inv(\pi)$ denotes the number of inversions of $\pi$ (cf. \cite{Berge:1971}).  Thus, if $\pi = [3,5,1,4,2] = (1,3)(2,5) \in S_5$, then $c(\pi)=3$ and $\inv(\pi)=6$.  For $\pi,\tau \in S_n$, $\pi \tau$ is the permutation obtained by applying $\tau$ first and then $\pi$.  If $\pi \in S_n$ and $\tau = (i,j)$ is a transposition, then $c(\tau \pi) = c(\pi)+1$ if $i$ and $j$ are part of the same cycle of $\pi$, and $c(\tau \pi) = c(\pi)-1$ if $i$ and $j$ are in different cycles of $\pi$; and similarly for $c(\pi \tau)$.  $\Fix(\pi)$ denotes the set of fixed points of $\pi$, and $\overline{\Fix(\pi)}$ denotes the complement set $[n]-\Fix(\pi)$.  We assume throughout that $n \ge 5$, since the problem is easily solved by using brute force for all smaller trees.

Throughout this work, $\Gamma$ denotes the Cayley graph generated by a transposition tree $T$.  We now recall some previously known bounds and briefly outline the proof of these bounds.

\begin{Theorem} \label{thm:dist:ubound}  \cite{Akers:Krishnamurthy:1989}  Let $T$ be a tree and let $\pi \in S_n$.  Let $\Gamma$ be the Cayley graph generated by $T$.  Then
$$\dist_{\Gamma}(I,\pi) \le c(\pi)-n+\sum_{i=1}^n \dist_T(i,\pi(i)).$$
\end{Theorem}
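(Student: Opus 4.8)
The plan is to prove, by induction, exactly the assertion of the theorem, written as $\dist_\Gamma(I,\pi)\le\Phi(\pi)$ for every $\pi\in S_n$, where I abbreviate $\Phi(\pi):=c(\pi)-n+D(\pi)$ and $D(\pi):=\sum_{i=1}^n\dist_T(i,\pi(i))$. First I would record two elementary facts about $\Phi$. Since $\dist_T(i,\pi(i))\ge 1$ whenever $\pi(i)\ne i$, we have $D(\pi)\ge|\overline{\Fix(\pi)}|=n-|\Fix(\pi)|\ge n-c(\pi)$, the last step because each fixed point is a $\pi$-cycle; hence $\Phi(\pi)\ge 0$, with equality forcing $\pi$ to have no cycle of length $\ge 2$, i.e. $\pi=I$. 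Thus $\Phi$ takes values in the nonnegative integers and $\Phi(\pi)=0\iff\pi=I$, which supplies both a legitimate induction variable and the base case $\dist_\Gamma(I,I)=0=\Phi(I)$.

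For the inductive step I would establish the following \emph{key lemma}: if $\pi\ne I$ then some edge $(i,j)$ of $T$ satisfies $\Phi(\pi\cdot(i,j))\le\Phi(\pi)-1$. Granting this, pick such an edge; since $\pi\cdot(i,j)$ is a neighbour of $\pi$ in $\Gamma$ and $0\le\Phi(\pi\cdot(i,j))<\Phi(\pi)$, the induction hypothesis gives $\dist_\Gamma(I,\pi)\le 1+\dist_\Gamma(I,\pi\cdot(i,j))\le 1+\Phi(\pi\cdot(i,j))\le\Phi(\pi)$, which closes the induction.

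To prove the key lemma I would first compute the effect of right-multiplying by a tree transposition. For an edge $e=(i,j)$, let $T_i,T_j$ be the two components of $T-e$, with $i\in T_i$ and $j\in T_j$. Multiplying $\pi$ by $(i,j)$ on the right only swaps the two values $\pi(i)$ and $\pi(j)$, so $D(\pi\cdot(i,j))-D(\pi)=a(e)+b(e)$, where, using that $i,j$ are adjacent (so $\dist_T(\cdot,i)$ and $\dist_T(\cdot,j)$ differ by exactly $1$ at every vertex, according to which side of $e$ it lies on), $a(e):=+1$ if $\pi(i)\in T_i$ and $-1$ if $\pi(i)\in T_j$, and $b(e):=+1$ if $\pi(j)\in T_j$ and $-1$ if $\pi(j)\in T_i$. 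By the cycle-count fact recalled in the preliminaries, $c(\pi\cdot(i,j))-c(\pi)=s(e)$, where $s(e)=+1$ if $i$ and $j$ lie in a common cycle of $\pi$ and $-1$ otherwise. Hence $\Phi(\pi\cdot(i,j))-\Phi(\pi)=s(e)+a(e)+b(e)$, a sum of three values in $\{-1,+1\}$, which is $\ge 1$ precisely when at most one of $s(e),a(e),b(e)$ equals $-1$; equivalently, \emph{$e$ fails to improve $\Phi$ iff at most one of $s(e),a(e),b(e)$ is $-1$.}

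Now suppose, for contradiction, that $\pi\ne I$ yet no edge improves $\Phi$. Choose $v_0$ with $\pi(v_0)\ne v_0$, and inductively let $v_{k+1}$ be the neighbour of $v_k$ on the tree-path from $v_k$ to $\pi(v_k)$. For $e_1=(v_0,v_1)$, taking $i=v_0$ and $j=v_1$: the vertex $\pi(v_0)$ lies on the $v_1$-side of $e_1$, so $a(e_1)=-1$; by the characterization this forces $s(e_1)=+1$ and $b(e_1)=+1$, i.e. $v_0$ and $v_1$ share a $\pi$-cycle and $\pi(v_1)$ lies on $v_1$'s side of $e_1$. Since $v_0$ is non-fixed and $v_0\ne v_1$, that common cycle has length $\ge 2$, so $v_1$ is non-fixed too; and $\pi(v_1)$ lying on $v_1$'s side of $e_1$ means the path from $v_1$ to $\pi(v_1)$ avoids $v_0$, so $v_2\ne v_0$. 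Repeating at each step, the walk $v_0,v_1,v_2,\dots$ never backtracks, hence is an infinite simple path in the finite tree $T$ — a contradiction, proving the key lemma and with it the theorem. The main obstacle is precisely this last step: extracting from the bare hypothesis ``no single generator decreases $\Phi$'' enough rigidity to grow an unbounded path in $T$, which requires carrying along in lockstep that each new $v_{k+1}$ is non-fixed, shares a $\pi$-cycle with $v_k$, has its $\pi$-image on the far side of $(v_k,v_{k+1})$, and differs from $v_{k-1}$; a secondary hazard is keeping the left/right multiplication conventions consistent with the Cayley-graph definition $g\sim gs$ throughout.
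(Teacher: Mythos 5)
Your proof is correct and follows essentially the same route as the paper's recalled argument from Akers--Krishnamurthy: induct on the potential $f_T(\pi)=c(\pi)-n+S_T(\pi)$, show it is nonnegative and vanishes only at $I$, and exhibit an edge of $T$ whose transposition strictly decreases it (your ``improving'' edges subsume the paper's admissible edges of types A and B, whose effect on $s+a+b$ is $-1$ in each case). The one place you go beyond the paper is that you actually prove the existence of such an edge via the non-backtracking walk forced by $a=-1\Rightarrow s=b=+1$, a step the paper merely asserts with ``it can be shown,'' and your sign bookkeeping for right multiplication is consistent with the Cayley-graph convention $g\sim gs$.
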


By taking the maximum over both sides, it follows that
\begin{Corollary} \label{cor:diam:ubound} \cite[p.188]{Heydemann:1997}
$$\diam(\Gamma) \le \max_{\pi \in S_n} \left\{ c(\pi)-n+\sum_{i=1}^n \dist_T(i,\pi(i)) \right\}.$$
\end{Corollary}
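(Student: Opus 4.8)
I will prove Theorem~\ref{thm:dist:ubound}; Corollary~\ref{cor:diam:ubound} then follows immediately, since $\Gamma$ is vertex-transitive (so $\diam(\Gamma)=\max_{\pi\in S_n}\dist_{\Gamma}(I,\pi)$, as noted above) and one has only to maximize the two sides of the theorem over $\pi\in S_n$. Write $D_T(\pi):=\sum_{i=1}^n\dist_T(i,\pi(i))$ and set $f(\pi):=c(\pi)-n+D_T(\pi)$, so that the assertion is $\dist_{\Gamma}(I,\pi)\le f(\pi)$. One convenient preliminary reduction is that both $\dist_{\Gamma}(I,\cdot)$ and $f$ are additive over the disjoint nontrivial cycles of $\pi$ (subadditive in the first case), so it suffices to treat a single $k$-cycle $\pi$, where the claim reads $\dist_{\Gamma}(I,\pi)\le D_T(\pi)-(k-1)$; I will, however, carry out the argument for a general $\pi$. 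Note also that $f(\pi)\ge 0$: every displaced point contributes at least $1$ to $D_T(\pi)$, so $D_T(\pi)\ge n-|\Fix(\pi)|\ge n-c(\pi)$.

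The plan is to prove $\dist_{\Gamma}(I,\pi)\le f(\pi)$ by induction on the nonnegative integer $f(\pi)$. The base case $f(\pi)=0$ forces $\pi=I$. For the inductive step it is enough to produce an edge $(a,b)$ of $T$ and a side such that, setting $\pi'$ to be $\pi(a,b)$ or $(a,b)\pi$, we have $f(\pi')\le f(\pi)-1$; then $\dist_{\Gamma}(I,\pi)\le 1+\dist_{\Gamma}(I,\pi')\le 1+f(\pi')\le f(\pi)$ by the induction hypothesis. Here I would use the bookkeeping already recalled in the paper: multiplying $\pi$ on either side by a transposition $(a,b)$ changes $c$ by exactly $\pm 1$, with $+1$ occurring precisely when $a$ and $b$ lie in a common cycle of $\pi$; and, because $\dist_T(a,x)-\dist_T(b,x)=\pm 1$ for adjacent $a,b$, multiplying by a tree transposition $(a,b)$ changes $D_T$ by an even amount in $\{-2,0,2\}$. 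So the target is an edge $(a,b)$ of $T$ for which either $D_T$ drops by $2$, or $D_T$ is unchanged while $a$ and $b$ lie in different cycles of $\pi$.

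To locate such an edge I would look at the minimal subtree $T'$ of $T$ containing the set $\overline{\Fix(\pi)}$ of displaced points: it is nonempty, has at least two vertices since $\pi\ne I$, and each of its leaves lies in $\overline{\Fix(\pi)}$. Pick a leaf $a$ of $T'$ and let $b$ be its unique neighbour in $T'$ (which is also its neighbour in $T$). Deleting the edge $ab$ separates $T$ into two parts, and by minimality of $T'$ the part containing $a$ meets $\overline{\Fix(\pi)}$ only in $\{a\}$; hence $\pi(a)$ and $\pi^{-1}(a)$ both lie on $b$'s side of the edge. A short case analysis on the location of $\pi(a)$ and $\pi(b)$ relative to this edge disposes of most situations using just the transposition $(a,b)$ on one side or the other (e.g.\ when $b$ is fixed, or $\pi(b)=a$, or $\pi(a)=b$). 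In the remaining situation—$b$ also displaced, and $a,b$ lying in a common cycle of $\pi$, necessarily of length at least four—one instead slides the token that belongs in position $a$ toward $a$ one tree-edge at a time along the unique $T$-path, and checks that each elementary swap lowers $f$ by one.

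The step I expect to be the real obstacle is precisely this cycle bookkeeping in the remaining situation. Whenever an elementary swap fails to make $D_T$ drop, one must show it instead \emph{merges} two cycles (so $c$ falls), and for that the edge chosen, the side of multiplication, and the direction of the slide all matter: a careless choice can split a cycle, raise $c$, and break the estimate (for instance $\pi=(1\,3\,2\,4)$ on a path shows that sliding the "wrong" token home fails at the first step). Identifying the correct choice, and formulating and maintaining the invariant that guarantees the per-step decrease along the slide, is where the genuine work lies; once that is in place the induction closes, proving Theorem~\ref{thm:dist:ubound}, and Corollary~\ref{cor:diam:ubound} follows as observed at the outset.
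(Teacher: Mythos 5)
Your derivation of the corollary from Theorem~\ref{thm:dist:ubound} is exactly the paper's (take the maximum of both sides, using vertex-transitivity), and your bookkeeping is correct: a tree transposition changes $c$ by $\pm 1$ and $S_T(\pi)=\sum_i\dist_T(i,\pi(i))$ by $-2$, $0$, or $2$, so the whole burden falls on exhibiting, for every $\pi\ne I$, an edge of $T$ whose application decreases $f_T$ by at least $1$. That is precisely the step you have not supplied. Your candidate edge $ab$, with $a$ a leaf of the minimal subtree spanning $\overline{\Fix(\pi)}$, fails exactly in the case you isolate ($b$ displaced, $\pi(a)\ne b$, $\pi(b)\ne a$, and $a,b$ in a common cycle): there, multiplication by $(a,b)$ on either side leaves $S_T$ unchanged and splits the cycle, so $f_T$ goes \emph{up}. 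Your fallback --- slide the marker $a$ home along the tree path --- is the part you concede is unproven, and it is not a routine verification: as your own example $\pi=(1\,3\,2\,4)$ on a path shows, for one of the two leaves of the minimal subtree the very first swap of the slide already increases $f_T$, so one needs both a selection rule (which leaf, which marker, which side of multiplication) and an invariant carried along the whole slide. Neither is given, so the induction does not close and the proof is incomplete at its only substantive point.

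The missing idea, which is the one the paper's recalled Akers--Krishnamurthy argument uses, is to prove the existence of a good edge \emph{globally} rather than insisting on the edge at a prescribed leaf. Orient each displaced vertex $v$ along the first edge of the $T$-path from $v$ to $\pi(v)$. If some edge receives orientations from both endpoints, it is of type A: swapping there drops $S_T$ by $2$, hence drops $f_T$ by at least $1$ whatever happens to $c$. Otherwise, follow the orientations from any displaced vertex; since each displaced vertex has exactly one out-arrow and $T$ is acyclic, revisiting a vertex would force some edge to be traversed in both directions (a type A edge), so the walk must terminate at a fixed point of $\pi$, and its last edge is of type B: the swap leaves $S_T$ unchanged and merges the $1$-cycle of the fixed point with the nontrivial cycle of its neighbour, so $c$ and hence $f_T$ drop by $1$. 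One application of such an edge gives $f_T(\pi')\le f_T(\pi)-1$, your induction then goes through, and the corollary follows as you state. In short: your framework is sound and matches the paper's, but the admissible-edge existence lemma is the theorem's entire content, and replacing it with a local construction at a leaf plus an unanalyzed multi-step slide leaves a genuine gap.
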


In the sequel, we refer to the first upper bound as the \emph{distance upper bound} $f_T(\pi)$ and the second upper bound as the \emph{diameter upper bound} $f(T)$:

\begin{Definition}  For a transposition tree $T$ and $\pi \in S_n$, define
$$f(T): = \max_{\pi \in S_n} f_T(\pi) = \max_{\pi \in S_n} \left\{c(\pi)-n+\sum_{i=1}^n \dist_T(i,\pi(i))\right\}.$$
\end{Definition}

We now recall from \cite{Akers:Krishnamurthy:1989} the proofs of these results since we refer to the proof in the sequel.  Start with a given tree $T$ on vertices labeled by $[n]$ and an element $\pi \in S_n$, $\pi \ne I$, for which we wish to determine $\dist_{\Gamma}(I,\pi)$.  Initially, at each vertex $i$ of $T$, we place a marker $\pi(i)$.  Multiplying $\pi$ by the transposition $(i,j)$ amounts to \emph{switching} the markers at vertices $i$ and $j$.  We now have a new set of markers at each vertex of $T$ corresponding to the permutation $\pi(i,j)$, which is a vertex adjacent to $\pi$ in $\Gamma$. The problem of determining $\dist_{\Gamma}(I,\pi)$ is then equivalent to that of finding the minimum number of switches necessary to home each marker (i.e. to bring each marker $i$ to vertex $i$).   Given any $T$ with vertex set $[n]$ and markers for these vertices corresponding to $\pi \ne I$, it can be shown that $T$ always has an edge $ij$ such that the edge satisfies one of the following two conditions: Either (A) the marker at $i$ and the marker at $j$ will both reduce their distance to $\pi(i)$ and $\pi(j)$, respectively, if the switch $(i,j)$ is applied, or (B) the marker at one of $i$ or $j$ is already homed, and the other marker wishes to use the switch $(i,j)$.  We call an edge that satisfies one of these two conditions an {\em admissible edge} of type A or type B.  It can be shown that during each step that a transposition $\tau$ corresponding to an admissible edge is applied to $\pi$, we get a new vertex $\pi'$ which has a strictly smaller value of the left hand side above; i.e., $f_T(\pi') < f_T(\pi)$, and it can be verified that $f_T(I)=0$.  This proves the bounds above. This algorithm, which we call the AK algorithm, can be viewed as `sorting' a permutation using only the transpositions defined by $T$, and the Cayley graph is the state transition diagram of the current permutation of markers.

We point out that this same diameter upper bound inequality is also derived in Vaughan \cite{Vaughan:1991}; however, this paper was published in 1991, whereas Akers and Krishnamurthy \cite{Akers:Krishnamurthy:1989} was published in 1989 and widely picked up on in the parallel and distributed computing and networking community by then.  There are some subsequent papers, such as \cite{Vaughan:Portier:1995} and \cite{Smith:1999}, which cite only Vaughan \cite{Vaughan:1991} and not \cite{Akers:Krishnamurthy:1989}.

Note that the distance and diameter bounds above need not hold if $T$ has cycles (the proof mentioned above breaks down because if $T$ has cycles, there exists a $\pi \ne I$ such that $T$ has no admissible edges for this $\pi$).  Thus, when we study the sharpness (or lack thereof) of the upper bounds, we assume throughout that $T$ is a tree and $\Gamma$ is the Cayley graph generated by a tree.

The diameter of Cayley graphs generated by transposition trees is known for some particular families of graphs.  For example, if the transposition tree is a path graph on $n$ vertices, then the diameter of the corresponding Cayley graph is ${n \choose 2}$, and if the tranposition tree is a star $K_{1,n-1}$, then the diameter of the corresponding Cayley graph is $\lfloor 3(n-1)/2 \rfloor$ (cf. \cite{Akers:Krishnamurthy:1989}).  For the special case when $T$ is a star, another upper bound on the distance between vertices in the Cayley graph is known:
\begin{Lemma}~ \cite{Akers:Krishnamurthy:1989}Let $T$ be a star. Then
$$\dist_{\Gamma}(I,\pi) \le n+c(\pi)-2|\Fix(\pi)| - r(\pi),$$
where $r(\pi)$ equals 0 if $\pi(1)=1$ and $r(\pi)=2$ otherwise (here, the center vertex of $T$ is assumed to have the label 1).
\end{Lemma}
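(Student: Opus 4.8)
I would prove this by a descent (potential-function) argument of exactly the flavour of the proof of Theorem~\ref{thm:dist:ubound} recalled above, but with the potential replaced by one adapted to the star. Write
$$g(\pi):=n+c(\pi)-2|\Fix(\pi)|-r(\pi),$$
so the claim reads $\dist_\Gamma(I,\pi)\le g(\pi)$. The plan is: (i) verify $g(I)=0$; (ii) verify $g(\pi)\ge 0$ for all $\pi$, with equality only when $\pi=I$; and (iii) show that for every $\pi\ne I$ there is a generator $\tau=(1,j)$ with $g(\pi\tau)=g(\pi)-1$. Granting these, $\pi\tau$ is a neighbour of $\pi$ in $\Gamma$, so $\dist_\Gamma(I,\pi)\le 1+\dist_\Gamma(I,\pi\tau)\le 1+g(\pi\tau)=g(\pi)$ follows by induction on the nonnegative integer $g(\pi)$, and item (ii) guarantees the induction bottoms out at $\pi=I$.

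Steps (i) and (ii) should be routine: $g(I)=n+n-2n-0=0$, and on setting $k:=n-|\Fix(\pi)|$ and using that $c(\pi)-|\Fix(\pi)|$ counts the nontrivial cycles of $\pi$, one has $g(\pi)=k+\bigl(c(\pi)-|\Fix(\pi)|\bigr)-r(\pi)$, which is $\ge 3$ if $\pi\ne I$ and $\pi(1)=1$, and is $\ge k-1\ge 1$ if $\pi(1)\ne 1$. The substance is step (iii), which in the marker picture amounts to choosing the right leaf to switch with the centre. If $\pi(1)\ne 1$, I would set $a:=\pi(1)$ and take $\tau=(1,a)$: the switch homes the marker $a$, so it removes $a$ from the cycle through vertex~$1$ and creates a new fixed point, and it leaves $r$ equal to $2$ --- unless that cycle was the transposition $(1,a)$, in which case $r$ drops to $0$ but two new fixed points appear instead; in either subcase a one-line tally of the changes in $c$, $|\Fix|$, and $r$ gives $g(\pi\tau)=g(\pi)-1$. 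If $\pi(1)=1$ but $\pi\ne I$, I would pick any $j$ with $\pi(j)\ne j$ and take $\tau=(1,j)$: the switch merges the $1$-cycle $(1)$ into the cycle of $j$, lowering both $c$ and $|\Fix|$ by one, and it forces $\pi(1)\ne 1$, raising $r$ from $0$ to $2$; again the tally gives $g(\pi\tau)=g(\pi)-1$.

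The only real obstacle is organizing step (iii) cleanly, which means being careful about the degenerate possibilities --- in particular the case $\pi(j)=1$, and the direction of composition (whether switching the markers at vertices $1$ and $j$ realizes $\pi\tau$ or $\tau\pi$); the latter is immaterial to the bound, since the generators are involutions and $\dist_\Gamma$ is left-invariant. Two remarks. First, an equivalent and perhaps more transparent route is to write down the sorting algorithm directly --- repeatedly: if the centre holds a non-home marker, switch it home, otherwise switch any unhomed marker into the centre --- and then tally $\ell_i+1$ switches for each nontrivial cycle not through vertex~$1$ and $\ell_1-1$ for the cycle through vertex~$1$ (if there is one), simplifying with $\sum_i\ell_i=n-|\Fix(\pi)|$; this reproduces $g(\pi)$. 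Second, a companion finite case check shows that no star transposition changes $g$ by more than one in absolute value, which supplies the matching lower bound, so the inequality in the Lemma is in fact an equality; only the stated upper bound is needed here.
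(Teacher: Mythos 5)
Your argument is correct: the potential $g(\pi)=n+c(\pi)-2|\Fix(\pi)|-r(\pi)$ satisfies $g(I)=0$, is strictly positive off the identity (your rewriting $g=k+m-r$ with $k=|\overline{\Fix(\pi)}|$ and $m$ the number of nontrivial cycles checks out), and your choice of switch in step (iii) does decrease $g$ by exactly one in every case --- note that the worry about $\pi(j)=1$ in the second case is vacuous, since $\pi(1)=1$ forces $\pi(j)\ne 1$ for $j\ne 1$. The route is genuinely different from the paper's, however. The paper does not prove this Lemma directly at all: it is quoted from \cite{Akers:Krishnamurthy:1989}, and the only derivation the paper offers is to specialize the general distance bound of Theorem~\ref{thm:dist:ubound} to the star and then observe the identity $S_T(\pi)=\sum_i\dist_T(i,\pi(i))=2|\overline{\Fix(\pi)}|-r(\pi)$, so that the right-hand side of the Lemma is just $f_T(\pi)$ for a star. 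Your proof is instead self-contained: it never invokes the general admissible-edge machinery, and it isolates exactly which star switch to make at each step. What your approach buys is a short, independent verification (and, via your closing remark that no star switch changes $g$ by more than one, the matching lower bound as well); what it loses is that the bookkeeping you do by hand is essentially the case analysis (a)--(d.4) that the paper carries out once, in terms of $f_T$, in the proof of Theorem~\ref{Thm:iffStar}, where the same $\pm 1$ tallies establish both that the Lemma's inequality is an equality and that the star is the unique tree with this property.
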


It is possible to obtain a heuristic derivation of the diameter upper bound formula, as follows.  It is straightforward to derive the distance upper bound for the special case when the transposition tree is a star $K_{1,n-1}$, and we get \cite{Akers:Krishnamurthy:1989}
$$\dist_{\Gamma}(I,\pi) \le n+c(\pi)-2|\Fix(\pi)| - r(\pi).$$
Observe that $|\Fix(\pi)| = n-|\overline{\Fix(\pi)|}$, which yields
$$\dist_{\Gamma}(I,\pi) \le c(\pi) - n + 2|\overline{\Fix(\pi)}| - r(\pi).$$
Note that when the tree is a star, $2 |\overline{\Fix(\pi)}|$ is almost (i.e. within 1 of) the sum of distances $\sum_{i=1}^n \dist_T(i,\pi(i))$.  This leads us to the question of whether the inequality
$$\dist_{\Gamma}(I,\pi) \le c(\pi)-n+\sum_{i=1}^n \dist_T(i,\pi(i))$$
also holds for all the remaining trees $T$, and this question has been answered affirmatively in the literature.

\subsection{Summary of our main contributions}

The problem of interest studied in this work is to obtain algorithms or bounds for the diameter of Cayley graphs of permutation groups.  When a bound or algorithm is proposed in the literature, it is of interest to determine the families of graphs for which the previously known bounds are exact, and to determine how far away these bounds can be from the true value in the worst case.  It is also of interest to have new or more efficient algorithms for estimating these parameters.  We now summarize our contributions on this problem.

%
%

Let $\Gamma$ denote the Cayley graph generated by a transposition tree $T$.  We show that the previously known distance upper bound
$$\dist_{\Gamma}(I,\pi) \le c(\pi)-n+\sum_{i=1}^n \dist_T(i,\pi(i)).$$
is exact or sharp (i.e. the inequality holds with equality) for all $\pi \in S_n$ if and only if $T$ is a star.

We also show that the previously known diameter upper bound
$$\diam(\Gamma) \le \max_{\pi \in S_n} \left\{ c(\pi)-n+\sum_{i=1}^n \dist_T(i,\pi(i)) \right\}.$$
is exact if $T$ is a star or a path.  Note that this also implies that even though the distance upper bound is {\em not} exact for any paths, if we take the maximum over both sides, we get that the diameter upper bound {\em is} exact for all paths (i.e. the strict inequality becomes an equality when we maximize over all $\pi \in S_n$).

It was shown in \cite{Akers:Krishnamurthy:1989} that: when $T$ is a star,
$$\dist_{\Gamma}(I,\pi) \le n+c(\pi)-2|\Fix(\pi)| - r(\pi).$$
We show here that this inequality holds with equality.

We then examine some properties of the AK algorithm.  We show that if the tree $T$ is a star or a path, then any factorization obtained by the AK algorithm to express a given permutation $\pi \in S_n$ as a word in the edges of $T$ is of minimum length.  However, there exist other transposition trees for which the AK algorithm is not optimal.

It is of interest to know how far away a bound can be from the true value in the worst case.  We show that for every $n$, there exists a transposition tree on $n$ vertices such that the difference between the diameter upper bound and the true diameter value of the Cayley graph is at least $n-4$.  This result gives a lower bound on the difference, and we leave it as an open problem to determine an upper bound for this difference.

Observe that evaluating the diameter upper bound requires on the order of $n!$ (times a polynomial) computations.  We propose another algorithm for estimating the diameter of the Cayley graph of the permutation group.  Our algorithm obtains an estimate, say $\beta$, efficiently, using only on the order of $n$ (times a polynomial) computations.  This is possible because our algorithm works directly with the transposition tree on $n$ vertices and does not require examining the different permutations (it is only the proof of this algorithm that requires examining the permutations). Furthermore, the estimate obtained by our algorithm is shown to be less than or equal to the previously known diameter upper bound.  Also, for all families of trees investigated so far, the estimate is an upper bound on the diameter, i.e.
 $$\diam(\Gamma)~ \le~ \beta ~ \le ~ \max_{\pi \in S_n} \left\{ c(\pi)-n+\sum_{i=1}^n \dist_T(i,\pi(i)) \right\},$$
 where we show that the second inequality holds for all trees, and the first inequality holds for many families of trees (in fact for all families of trees investigated so far).

Some further interesting extensions and open problems on this algorithm and related bounds are discussed in Section~\ref{sec:algorithm}.

\section{Sharpness of the distance and diameter upper bounds}\label{sec:mainresults}
In our proofs, it will be convenient to define the sum of distances in the tree for a given permutation $$S_T(\pi) := \sum_{i=1}^n \dist_T(i,\pi(i)).$$  Thus, $f_T(\pi) = c(\pi)-n+S_T(\pi)$.
While the bounds and results here are independent of the labeling of the vertices of $T$, it will be convenient to assume that the center vertex of a star has label 1, and that the vertices of a path are labeled consecutively from 1 to $n$.
Also, note that the diameter and distance bounds are invariant to a translation of the labels on the set of integers, i.e. we can replace the labels $\{1,2,\ldots,n\}$ by say $\{2,3,\ldots,n+1\}$.

\bigskip
\begin{Theorem} \label{Thm:iffStar} Let $\Gamma$ be the Cayley graph generated by transposition tree $T$. Then, in the distance upper bound inequality
$$ \dist_{\Gamma}(I,\pi) \le c(\pi)-n+\sum_{i=1}^n \dist_T(i,\pi(i)),$$
we have equality for all $\pi \in S_n$ if and only if $T$ is the star $K_{1,n-1}$.
\end{Theorem}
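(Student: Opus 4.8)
The plan is to prove the two directions separately. For the ``if'' direction, suppose $T$ is the star $K_{1,n-1}$ with center labeled $1$. I would invoke the star-specific upper bound from the excerpt, $\dist_{\Gamma}(I,\pi) \le n+c(\pi)-2|\Fix(\pi)|-r(\pi)$, together with the reduction $|\Fix(\pi)| = n - |\overline{\Fix(\pi)}|$, to rewrite this as $\dist_{\Gamma}(I,\pi) \le c(\pi)-n+2|\overline{\Fix(\pi)}|-r(\pi)$. The key observation is that for a star, $S_T(\pi) = \sum_i \dist_T(i,\pi(i)) = 2|\overline{\Fix(\pi)}| - r(\pi)$ exactly: every non-fixed marker must travel through the center (distance $2$), except that if the center vertex itself is displaced (i.e. $\pi(1)\ne 1$) then one leaf marker travels only distance $1$ to reach the center and the center's marker travels distance $1$ to a leaf, saving $2$ overall, which is exactly the $r(\pi)$ correction. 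Hence the AK distance upper bound $f_T(\pi) = c(\pi)-n+S_T(\pi)$ coincides with the star bound, and since the paper claims (and I would establish, or cite as established) that the star bound holds with equality, we get $\dist_{\Gamma}(I,\pi) = f_T(\pi)$ for all $\pi$. Actually, the cleanest route is to prove the matching lower bound $\dist_{\Gamma}(I,\pi) \ge c(\pi)-n+S_T(\pi)$ directly for the star by a parity/counting argument: each transposition $(1,j)$ changes the cycle count by exactly $\pm 1$ and homes at most a bounded amount of ``displacement,'' so one counts how many switches are forced.

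For the ``only if'' direction, I would argue the contrapositive: if $T$ is a tree that is not a star, I must exhibit a single permutation $\pi$ for which the inequality is strict, $\dist_{\Gamma}(I,\pi) < f_T(\pi)$. A tree that is not a star contains two edges that do not share a vertex, or equivalently contains a path on four vertices $a - b - c - d$. The natural candidate is a permutation supported on such a configuration where the sum-of-distances term overcounts: for instance, take $\pi$ to be a single transposition-like or short-cycle permutation on $\{a,b,c,d\}$ chosen so that the markers can be homed by a short word in the edges of $T$, while $S_T(\pi)$ (which uses tree distances, some of which are $\ge 2$ or $3$) plus the cycle correction is strictly larger. Concretely, I expect $\pi = (a\,c)(b\,d)$ or $\pi$ sending the marker at $a$ to $c$ and at $c$ to $a$ etc.\ will have $\dist_T(a,c)=2$ contributing a $2$ to $S_T$, while in $\Gamma$ a clever sequence of adjacent switches along $a-b-c$ homes things faster than the bound predicts; one then checks the actual distance in $\Gamma$ by exhibiting an explicit short factorization and verifying $f_T(\pi)$ exceeds it. Care is needed to pick $\pi$ so that the whole of $T$ outside $\{a,b,c,d\}$ contributes nothing, i.e.\ $\pi$ fixes all other vertices, so the comparison localizes to the four-vertex subpath.

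The main obstacle I anticipate is the ``only if'' direction: choosing the witness permutation $\pi$ so that one can \emph{exactly} compute $\dist_{\Gamma}(I,\pi)$ (not just bound it), and verifying the strict gap. Upper-bounding $\dist_{\Gamma}(I,\pi)$ is easy (exhibit a word), but proving the word is short enough to beat $f_T(\pi)$ may require a lower-bound argument on $\dist_\Gamma$ as well, or at least a careful case analysis showing no shorter word exists is \emph{not} needed — one only needs $\dist_\Gamma(I,\pi) \le$ something $< f_T(\pi)$. So the real work is: (i) identify the $P_4$ in $T$; (ii) write down $\pi$ fixing everything off the $P_4$; (iii) compute $c(\pi)$, $S_T(\pi)$, hence $f_T(\pi)$; (iv) hand-build a factorization of $\pi$ into edge-transpositions of $T$ of length strictly less than $f_T(\pi)$. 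I would then remark that the argument is uniform in $n$ and in which $P_4$ is used, completing the proof.
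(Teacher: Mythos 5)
Your proposal follows essentially the same route as the paper: for the star, the paper proves the matching lower bound $\dist_{\Gamma}(I,\pi)\ge f_T(\pi)$ by a case analysis showing that every switch $(1,i)$ decreases $f_T$ by at most $1$ (your ``cleanest route''), and for the only-if direction it takes a path $i$--$j$--$k$--$\ell$ of length $3$ in a non-star tree, sets $\pi=(i,k)(j,\ell)$ with $f_T(\pi)=6$, and exhibits the length-$4$ word $(j,k),(i,j),(k,\ell),(j,k)$ --- exactly your plan, including the observation that only an upper bound on $\dist_\Gamma(I,\pi)$ is needed. The one thing to avoid is your first suggested route for the star, namely citing the star-specific bound as an equality: in the paper that equality is a \emph{corollary} of this theorem, so invoking it here would be circular.
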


\begin{proof}
Suppose $T$ is the star $K_{1,n-1}$.  It is already known that $\dist_{\Gamma}(I,\pi) \le f_T(\pi)$ for all $\pi \in S_n$.  We now prove the reverse inequality.  We want to show that $f_T(\pi)$ is the minimum number of transpositions of the form $(1,i), 2 \le i \le n$ required to sort $\pi$.  Each vertex $i$ of $T$ is initially assigned the marker $\pi(i)$.  Before the markers along edge $(1,i)$ are interchanged,  there are four possibilities for the values of the marker $\pi(1)$ at vertex 1 and marker $\pi(i)$ at vertex $i$:

(a)  $\pi(1)=1$ and $\pi(i)=i$:  In this case, applying transposition $(1,i)$ creates a new permutation $\pi'$ for which $c(\pi)$ has reduced by 1 (i.e. $c(\pi')=c(\pi)-1$) and $S_T(\pi)$ has increased by 2, thereby increasing $f_T(\pi)$ by 1.

(b) $\pi(1)=1$ and $\pi(i) \ne i$:  Applying transposition $(1,i)$ reduces $c(\pi)$ and doesn't affect $S_T(\pi)$.  Hence, $f_T(\pi)$ is reduced by 1.

(c) $\pi(1) \ne 1$ and $\pi(i)=i$:  Applying $(1,i)$ reduces $c(\pi)$ by 1 and increases $S_T(\pi)$ by 2, hence increases $f_T(\pi)$ by 1.

(d)  $\pi(1) \ne 1$ and $\pi(i) \ne i$:  There are four subcases here:

(d.1) The first subcase is when $\pi(1)=i$ and $\pi(i)=1$.  In this case, applying $(1,i)$ increases $c(\pi)$ by 1, and reduces $S_T(\pi)$ by 2, thereby reducing $f_T(\pi)$ by 1.

(d.2)  Suppose $\pi(1)=j$ (where $j \ne 1,i$) and $\pi(i)=1$. Applying $(1,i)$ increases $c(\pi)$ by 1 and doesn't affect $S_T(\pi)$.  So $f_T(\pi)$ increases by 1.

(d.3) Suppose $\pi(1)=i$ and $\pi(i)=j \ne 1$.  Then applying $(1,i)$ increases $c(\pi)$ by 1 and reduces $S_T(\pi)$ by 2, and hence reduces $f_T(\pi)$ by 1.

(d.4)  Suppose $\pi(1)=k$ and $\pi(i)=j$, where $j,k \ne 1,i$.  Then applying $(1,i)$ changes $c(\pi)$ by 1 and doesn't change $S_T(\pi)$, so that $f_T(\pi)$ changes by 1.

In all cases above, switching the markers on an edge $(1,i)$ of $T$ reduces $f_T(\pi)$ by at most 1.  Hence, the minimum number of transpositions required to sort $\pi$, or equivalently, the value of $\dist_{\Gamma}(I,\pi)$, is at least $f_T(\pi)$.  Hence, $\dist_{\Gamma}(I,\pi) \ge f_T(\pi)$ for all $\pi \in S_n$.   This proves the reverse inequality.

Observe that $S_T(\pi)=2|\overline{\Fix(\pi)}|$ when $\pi(1)=1$, and $S_T(\pi)=2|\overline{\Fix(\pi)}|-2$ otherwise. Thus, it is seen that when $T$ is the star graph, $f_T(\pi)$ evaluates to $c(\pi) - n + 2 |{\overline{\Fix(\pi)}}| - r(\pi).$  Hence, $\dist_{\Gamma}(I,\pi) = c(\pi) - n + 2 |{\overline{\Fix(\pi)}}| - r(\pi)$.

Now suppose $T$ is not a star.  Then, $\diam(T) \ge 3$. So $T$ contains 4 ordered vertices $i,j,k$ and $\ell$ that comprise a path of length 3.  Let $\pi$ be the permutation $(i,k)(j,l)$.  Then, $c(\pi) = n-2$, $S_T(\pi)=8$, and hence, $f_T(\pi)=6$, but $\dist_{\Gamma}(I,\pi) \le 4$, as can be easily verified by applying transpositions $(j,k),(i,j),(k,l)$ and $(j,k)$.
\end{proof}

\bigskip
\begin{Corollary} \label{cor:star} Let $T$ be the star $K_{1,n-1}$ on $n$ vertices.  Then the previously known upper bound inequalities
$$\dist_{\Gamma}(I,\pi) \le n+c(\pi)-2|\Fix(\pi)| - r(\pi),$$
and
$$\diam(\Gamma) \le \max_{\pi \in S_n} \left\{ c(\pi)-n+\sum_{i=1}^n \dist_T(i,\pi(i)) \right\}$$
hold with equality.
\end{Corollary}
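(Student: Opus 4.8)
The plan is to obtain both equalities as immediate consequences of Theorem~\ref{Thm:iffStar}, which already establishes that for the star $T=K_{1,n-1}$ one has $\dist_{\Gamma}(I,\pi)=f_T(\pi)$ for every $\pi\in S_n$, together with the evaluation of $S_T(\pi)$ for a star recorded at the end of that proof. So no new combinatorial argument is needed; the task reduces to algebraic bookkeeping and one maximization.

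First I would handle the distance bound. Starting from $\dist_{\Gamma}(I,\pi)=f_T(\pi)=c(\pi)-n+S_T(\pi)$, I would substitute the piecewise value of $S_T(\pi)$ on a star, namely $S_T(\pi)=2|\overline{\Fix(\pi)}|$ when $\pi(1)=1$ and $S_T(\pi)=2|\overline{\Fix(\pi)}|-2$ otherwise, which is exactly $S_T(\pi)=2|\overline{\Fix(\pi)}|-r(\pi)$ with $r(\pi)$ as defined. Then, using the elementary identity $|\Fix(\pi)|+|\overline{\Fix(\pi)}|=n$, I would rewrite $2|\overline{\Fix(\pi)}|=2n-2|\Fix(\pi)|$. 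Collecting terms yields $\dist_{\Gamma}(I,\pi)=c(\pi)-n+2n-2|\Fix(\pi)|-r(\pi)=n+c(\pi)-2|\Fix(\pi)|-r(\pi)$, which is precisely the claimed equality.

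For the diameter bound I would take the maximum over $\pi\in S_n$ of both sides of $\dist_{\Gamma}(I,\pi)=f_T(\pi)$. By the remark in the preliminaries, $\diam(\Gamma)=\max_{\pi\in S_n}\dist_{\Gamma}(I,\pi)$, while by definition $f(T)=\max_{\pi\in S_n}f_T(\pi)$; hence $\diam(\Gamma)=f(T)$, i.e. the bound of Corollary~\ref{cor:diam:ubound} is attained with equality for the star.

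There is essentially no obstacle here, since the substantive work was already done in proving Theorem~\ref{Thm:iffStar}; the only point requiring a little care is correctly matching the two regimes $\pi(1)=1$ and $\pi(1)\ne 1$ of the piecewise $S_T(\pi)$ to the two-valued function $r(\pi)$, and keeping track of the count $|\Fix(\pi)|+|\overline{\Fix(\pi)}|=n$. If one wished, one could additionally record the explicit value $\max_{\pi\in S_n}f_T(\pi)=\lfloor 3(n-1)/2\rfloor$ to reconcile with the known diameter of the star's Cayley graph, but this is not needed for the statement.
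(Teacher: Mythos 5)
Your proposal is correct and follows essentially the same route as the paper: the paper also derives both equalities directly from Theorem~\ref{Thm:iffStar}, observing at the end of that proof that $S_T(\pi)=2|\overline{\Fix(\pi)}|-r(\pi)$ for the star and hence $f_T(\pi)=n+c(\pi)-2|\Fix(\pi)|-r(\pi)$, with the diameter equality following by maximizing over $\pi$. Your algebraic bookkeeping via $|\Fix(\pi)|+|\overline{\Fix(\pi)}|=n$ is exactly the intended argument.
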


Theorem~\ref{Thm:iffStar} implies that if $T$ is the path graph (which is not a star for $n \ge 4$), then there exists a $\pi \in S_n$ for which the distance upper bound is strict:
\begin{Corollary} \label{cor:star2} Let $T$ be the path graph on $n$ vertices.  Then there exists a $\pi \in S_n$ for which
$$\dist_{\Gamma}(I,\pi) < c(\pi)-n+\sum_{i=1}^n \dist_T(i,\pi(i)).$$
\end{Corollary}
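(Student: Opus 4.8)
The plan is to obtain this as an immediate consequence of Theorem~\ref{Thm:iffStar}. Since we assume $n \ge 5$, the path graph $P_n$ on $n$ vertices is not the star $K_{1,n-1}$ (a path on $n \ge 4$ vertices has maximum degree $2$, whereas $K_{1,n-1}$ has a vertex of degree $n-1 \ge 3$). By Theorem~\ref{Thm:iffStar}, the distance upper bound holds with equality for every $\pi \in S_n$ only if $T$ is a star; hence for $T = P_n$ there must exist at least one $\pi \in S_n$ making the inequality strict. This already proves the corollary.

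For a concrete witness one can simply extract the permutation used in the ``$T$ not a star'' direction of the proof of Theorem~\ref{Thm:iffStar}. Labelling the path so that its vertices are $1,2,\ldots,n$ in order, the vertices $1,2,3,4$ induce a sub-path of length $3$. I would take $\pi := (1,3)(2,4)$ and compute the right-hand side directly: $\pi$ consists of two $2$-cycles and $n-4$ fixed points, so $c(\pi) = n-2$; and $S_T(\pi) = \dist_T(1,3)+\dist_T(3,1)+\dist_T(2,4)+\dist_T(4,2) = 2+2+2+2 = 8$. Hence $f_T(\pi) = c(\pi) - n + S_T(\pi) = (n-2) - n + 8 = 6$.

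It then remains to check that $\dist_\Gamma(I,\pi) \le 4$, which is witnessed by the fact that the four transpositions $(2,3),(1,2),(3,4),(2,3)$ --- all of which are edges of the path $P_n$ --- sort $\pi$; equivalently $\pi = (2,3)(3,4)(1,2)(2,3)$ in $S_n$. Combining, $\dist_\Gamma(I,\pi) \le 4 < 6 = f_T(\pi)$, which is the desired strict inequality. There is no real obstacle here: the only point requiring a moment's care is to carry out the factorization using the composition convention fixed in Section~1.1 ($\pi\tau$ means apply $\tau$ first, then $\pi$), so that one verifies the product of the four transpositions in the correct order.
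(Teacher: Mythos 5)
Your proof is correct and follows essentially the same route as the paper: the corollary is deduced directly from Theorem~\ref{Thm:iffStar} since a path on $n\ge 4$ vertices is not a star, and your explicit witness $\pi=(1,3)(2,4)$ with the sorting sequence $(2,3),(1,2),(3,4),(2,3)$ is exactly the permutation and factorization used in the ``not a star'' direction of that theorem's proof. All the computations ($c(\pi)=n-2$, $S_T(\pi)=8$, $f_T(\pi)=6$, $\dist_\Gamma(I,\pi)\le 4$) check out.
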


  Despite such a result, when taking the maximum over both sides, we obtain equality:

\bigskip
\begin{Theorem}  \label{Thm:pathfTmax:equals:diam} Let $\Gamma$ be the Cayley graph generated by a transposition tree $T$. Then the diameter upper bound inequality
$$\diam(\Gamma) \le \max_{\pi \in S_n} \left\{ c(\pi)-n+\sum_{i=1}^n \dist_T(i,\pi(i)) \right\}$$
holds with equality if $T$ is a path.
\end{Theorem}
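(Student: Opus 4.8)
The plan is to sandwich everything between $\binom{n}{2}$. Label the path $1,2,\dots,n$ consecutively, so that $\dist_T(i,j)=|i-j|$ and $S_T(\pi)=\sum_{i=1}^n|i-\pi(i)|$. The heart of the argument is the inequality $f_T(\pi)\le\binom{n}{2}$ for every $\pi\in S_n$. Granting this for a moment, we get $\diam(\Gamma)\le\max_{\pi}f_T(\pi)\le\binom{n}{2}$, the first inequality being Corollary~\ref{cor:diam:ubound}; on the other hand, taking $\omega$ to be the reversal $\omega(i)=n+1-i$, every generator $(i,i+1)$ of $\Gamma$ changes the number of inversions by exactly $\pm1$, so $\dist_\Gamma(I,\omega)\ge\inv(\omega)=\binom{n}{2}$ and hence $\diam(\Gamma)\ge\binom{n}{2}$. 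Together these force $\diam(\Gamma)=\max_{\pi}f_T(\pi)=\binom{n}{2}$, which is the assertion of the theorem.

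It remains to prove $f_T(\pi)=c(\pi)-n+S_T(\pi)\le\binom{n}{2}$. Let $A=\{\,i:\pi(i)>i\,\}$ be the set of excedance positions and $a=|A|$. From $\sum_i(\pi(i)-i)=0$ one has the elementary identity $S_T(\pi)=2\sum_{i\in A}(\pi(i)-i)$; bounding the excedance values $\{\pi(i):i\in A\}$ above by $\{n-a+1,\dots,n\}$ and the excedance positions $A$ below by $\{1,\dots,a\}$ gives
$$S_T(\pi)\ \le\ 2\Bigl(\textstyle\sum_{j=n-a+1}^{n}j-\sum_{j=1}^{a}j\Bigr)\ =\ 2a(n-a).$$
Separately, each non-trivial cycle of $\pi$ contains its largest element (which is a deficiency) and its smallest element (which is an excedance), so a cycle of length $\ell$ has at most $\ell-1$ excedances while contributing exactly $\ell-1$ to the reflection length $n-c(\pi)$; summing over all cycles, $n-c(\pi)\ge a$, i.e. $c(\pi)\le n-a$.

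Combining, $f_T(\pi)\le(n-a)-n+2a(n-a)=(n-a)(2a+1)-n$, and the key step is the arithmetic inequality $(n-a)(2a+1)\le\binom{n+1}{2}$, valid for every integer $a$. This is immediate from the factorization
$$\binom{n+1}{2}-(n-a)(2a+1)\ =\ \tfrac12(n-2a)(n-2a-1),$$
whose right-hand side is a product of two consecutive integers and therefore $\ge0$ (with equality exactly when $a=\lfloor n/2\rfloor$). Hence $f_T(\pi)\le\binom{n+1}{2}-n=\binom{n}{2}$, as needed. As a sanity check, the reversal $\omega$ realizes $a=\lfloor n/2\rfloor$ excedances, $c(\omega)=\lceil n/2\rceil$ and $S_T(\omega)=\lfloor n^2/2\rfloor$, giving $f_T(\omega)=\binom{n}{2}$, so the maximum of $f_T$ is actually attained.

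No serious obstacle is expected: every step is a short estimate. The only point requiring a little care is that the two bounds $S_T(\pi)\le2a(n-a)$ and $c(\pi)\le n-a$ must hold for the \emph{same} $\pi$ — they do, being proved independently — after which one is reduced to the one-variable inequality $(n-a)(2a+1)\le\binom{n+1}{2}$; the consecutive-integers factorization above settles it at once, though one could instead just note that $(n-a)(2a+1)$ is a downward parabola maximized at $a=(2n-1)/4$ and check the two nearest integers.
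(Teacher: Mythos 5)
Your proof is correct, and it takes a genuinely different and substantially shorter route than the paper's. The paper proves the key estimate $f_T(\pi)\le\binom{n}{2}$ by induction on $n$ with a long case analysis on the cycle of $\pi$ containing $n$ (and whether $1$ lies in that cycle), repeatedly replacing $\pi$ by a permutation $\pi'$ with a split or merged cycle and verifying $f_T(\pi)\le f_T(\pi')$ via distance computations on the path. You instead give a direct, non-inductive argument: the identity $S_T(\pi)=2\sum_{i\in A}(\pi(i)-i)$ over the excedance set $A$ (valid since $\sum_i(\pi(i)-i)=0$), the rearrangement bound $S_T(\pi)\le 2a(n-a)$, and the observation that the largest element of each nontrivial cycle is a deficiency, giving $a\le n-c(\pi)$; combining these reduces everything to the one-variable inequality $(n-a)(2a+1)\le\binom{n+1}{2}$, which your factorization $\binom{n+1}{2}-(n-a)(2a+1)=\tfrac12(n-2a)(n-2a-1)$ settles, and I have checked that this algebra is right. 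Both bounds are proved for the same $\pi$ independently, so combining them is legitimate even though they need not be simultaneously tight. Your approach buys brevity, transparency, and an explicit description of when equality can occur ($a=\lfloor n/2\rfloor$); it also rederives the lower bound $\diam(\Gamma)\ge\binom{n}{2}$ self-containedly via inversions, where the paper cites the known value of the diameter for the path (and exhibits the reversal as an extremal permutation, as you do in your sanity check). The one thing the paper's longer proof arguably provides is a catalogue of cycle-surgery inequalities that it reuses nearly verbatim in the later strictness theorem for the near-path tree; your argument does not directly supply those, but as a proof of this theorem it is complete and cleaner.
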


\begin{proof}
Let $T$ be the path graph on $n$ vertices.  It is known that $\diam(\Gamma) = {n \choose 2}$ \cite{Berge:1971}.  Hence, it suffices to prove
$$\max_{\pi \in S_n} f_T(\pi) = {n \choose 2}.$$
Let $\sigma = [n, n-2, \ldots, 2,1]$.  It can be verified that $f_T(\sigma)$ evaluates to ${n \choose 2}$.  Thus, it remains to prove the bound
$$f_T(\pi) \le {n \choose 2},~~~\forall~\pi \in S_n.$$
We prove this by induction on $n$.  The assertion can be easily verified for small values of $n$.  So fix $n$, and now assume the assertion holds for smaller values of $n$.   Write $\pi$ as $\pi_1 \pi_2 \ldots \pi_s$.  Thus $\pi$ is a product of $s$ disjoint cycles, and suppose $\pi_1$ is the cycle that contains $n$. We consider three cases, depending on whether $\pi$ fixes $n$, whether $\pi$ maps $n$ to 1, or  whether $\pi$ maps $n$ to some $j \ne 1$:

(a) Suppose $\pi_1 = (n)$.  Define $\pi' := \pi_2 \ldots \pi_s \in S_{n-1}$. Let $T'$ be the tree on vertex set $[n-1]$.  We have that  $f_T(\pi) = c(\pi) - n + S_T(\pi) = c(\pi')+1 -n +S_{T'}(\pi') = c(\pi') - (n-1) + S_{T'}(\pi')$, which is at most ${{n-1} \choose 2}$ by the inductive hypothesis.

(b) Suppose $\pi$ maps $n$ to 1.  There are a few subcases, depending on the length of $\pi_1$:

(b.1) Suppose $\pi_1 = (n,1)$, a transposition.  Define $\pi' := \pi(n,1) = (1)(n)\pi_2 \ldots \pi_s$.  Then, $c(\pi') = c(\pi)+1$ and $S_T(\pi') = S_T(\pi)-2(n-1)$.  Hence, $f_T(\pi) = c(\pi)-n+S_T(\pi) = c(\pi')-1-n+S_T(\pi')+2(n-1)$.  Let $T''$ denote the tree on vertex set $\{2,3,\ldots,n-1\}$, and let $\pi'' = \pi_2 \ldots \pi_s$ be a  permutation of the vertices of $T''$.  Then $f_T(\pi) = 2+c(\pi'')-1-n+S_T(\pi')+2(n-1)=c(\pi'')-(n-2)+S_{T''}(\pi'')+2(n-1)-1$.  Relabeling the vertices of $T''$ and the elements of $\pi''$ from $\{2,\ldots,n-1\}$ to $[n-2]$ does not change $c(\pi'')-(n-2)+S_{T''}(\pi'')$, to which we can apply the inductive hypothesis.  The bound then follows.

(b.2) Suppose $\pi_1 = (n,1,j)$, where $2 \le j \le n-1$.  Define $\pi'=(n,1)(j)\pi_2 \ldots \pi_s$.  It can be verified that $S_T(\pi)=S_T(\pi')$, and $c(\pi')=c(\pi)+1$.  Hence, $f_T(\pi) = f_T(\pi')-1 \le {n \choose 2}-1$ by the earlier subcase (b.1).  Note that $S_T(\pi) = S_T(\pi^{-1})$ and $c(\pi)=c(\pi^{-1})$, so that the bound evaluates to the same value when $\pi_1=(n,j,1)$ and when $\pi_1=(1,n,j)$.

(b.3) Suppose $\pi_1=(n,1,j_1,\ldots,j_\ell)$ contains at least 4 elements. Define $\pi' = (n,1)(j_1,\ldots,j_\ell)\pi_2 \ldots \pi_s$, where the first cycle of $\pi$ has now been broken down into two disjoint cycles to obtain $\pi'$.  Define $x:=|j_1-j_2|+\ldots+|j_{\ell-1}-j_\ell|$. Then, $S_T(\pi')=2(n-1)+x+|j_\ell-j_1|+d$ for some $d$, where $d$ is the sum of distances obtained from the remaining cycles $\pi_2,\ldots,\pi_s$.   Also, $S_T(\pi)=n-1+|1-j_1|+x+|j_\ell-n|+d$.  Also, $c(\pi)=c(\pi')-1$.  Using the equations obtained here and substituting, we get that $f_T(\pi) = c(\pi)-n+S_T(\pi) = c(\pi')-n+S_T(\pi')-n+|1-j_1|+|j_\ell-n|-|j_\ell-j_1|$.  Using the bound $c(\pi')-n+S_T(\pi') \le {n \choose 2}$ of subcase (b.1), and using the fact that $|1-j_1|+|j_\ell-n|-|j_\ell-j_1| \le n-1$, we get the desired bound $f_T(\pi) \le {n \choose 2}$.

(c) We consider two subcases.  In the first subcase, 1 and $n$ are in the same cycle of $\pi$, and in the second subcase 1 and $n$ are in different cycles of $\pi$.

(c.1) Let $\pi = (n,j_1,\ldots,j_\ell,1,k_1,\ldots,k_t)\pi_2 \ldots \pi_s$.

Define $\pi' =  (n,j_1,\ldots,j_\ell,1)$ $(k_1,\ldots,k_t)\pi_2 \ldots \pi_s$.  We show that $f_T(\pi) \le f_T(\pi')$. This latter quantity is bounded from above by ${n \choose 2}$ due to the earlier subcases.

(c.1.1) The subcase $\ell=0$ has been addressed in subcases (b.2) and (b.3).  Since there is a vertex automorphism of the path graph $T$ that maps 1 to $n$ and $n$ to 1, the subcase $t=0$ has also been addressed by the subcase $\ell=0$.

(c.1.2)  Now suppose $\ell=1$ and $t=1$.  The sum of distances of elements in the cycle $(n,j_1,1,k_1)$ and $(n,j_1,1)(k_1)$ are equal, and $c(\pi) < c(\pi')$.  Hence, $f_T(\pi) < f_T(\pi')$.

(c.1.3) Now suppose $t=1$ and $l \ge 2$; note that by symmetry, this subcase also addresses the subcase $\ell=1$ and $t \ge 2$.  A calculation of the sum of distances $S_T(\pi)$ and $S_T(\pi')$ yields, again, that $S_T(\pi)=S_T(\pi')$.  Since $c(\pi) < c(\pi')$, $f_T(\pi) < f_T(\pi')$

(c.1.4) Finally, suppose $t \ge 2$ and $\ell \ge 2$. Recall that $\pi$ contains the cycle $(n,j_1,\ldots,j_\ell,1,k_1,\ldots,k_t)$, and $\pi'$ contains the two cycles  $(n,j_1,\ldots,j_\ell,1)$ and $(k_1,\ldots,k_t)$.  A summation of distances due to elements in these cycles yields that $f_T(\pi) \le f_T(\pi')$ if and only if $k_1-k_t \le |k_1-k_t|+1$, which is clearly true.

(c.2) Let $\pi = (n,j_1,\ldots,j_\ell)(1,k_1,\ldots,k_t)\pi_3 \ldots \pi_s$.

(c.2.1) Suppose $\ell=1$, i.e. $\pi_1=(n,j_1)$ is a cycle of $\pi$.  Define a new permutation $\pi' = (n,1) \pi_2' \ldots \pi_s'$ that has the same type as $\pi$ but with the labels of 1 and $j_1$ interchanged, i.e. $\pi' = (1,j_1)~\pi~(1,j_1)$. Then $c(\pi')=c(\pi)$.  Note that $S_T(\pi)$ contains terms $|n-j_1|$ and $|j_1-n|$, corresponding to the cycle $\pi_1$.  When going from $\pi$ to $\pi'$, the sum of two terms of $S_T$ is increased by an amount equal to $2|j_1-1|$  because the cycle $(n,j_1)$ is replaced by the cycle $(n,1)$.   When going from $\pi$ to $\pi'$, the cycle containing the element 1 is now replaced by a cycle containing the element $j_1$, and this could contribute to a decrease in $S_T$ by at most $2|j_1-1|$.  Hence, $f_T(\pi) \le f_T(\pi')$.  The bound then follows from applying the earlier subcase (b.1) to $\pi'$.  This resolves the case $\ell=1$, and by symmetry, also the case $t=1$.

(c.2.2)  So now assume $\ell \ge 2$ and $t \ge 2$.

Let $\pi = (n,j_1,\ldots,j_\ell)(1,k_1,\ldots,k_t)\pi_3 \ldots \pi_s$, and

let $\pi' = (n,1)(j_1,\ldots,j_\ell,k_1,\ldots,k_t)$ $\pi_3 \ldots \pi_s$.
A computation of the sum of distances in $S_T(\pi)$ and $S_T(\pi')$ yields that $f_T(\pi) \le f_T(\pi')$ if and only if $k_1-j_\ell+k_t-j_1 \le |k_1-j_\ell|+|k_t-j_1|+1$, which is clearly true.
\end{proof}
\section{On the AK algorithm}

We describe some properties of the AK algorithm here.

\begin{Theorem} \label{Thm:AK:optimal}
If $T$ is a star or a path, then the AK algorithm sorts any permutation using the minimum number of transpositions.
\end{Theorem}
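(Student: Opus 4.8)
\noindent The plan is to show that, for every $\pi \in S_n$ and every run of the AK algorithm on $\pi$, the number $k$ of transpositions it applies equals $\dist_\Gamma(I,\pi)$. The transpositions applied by AK, read in order, form a word over the edges of $T$ that expresses $\pi^{-1}$, so automatically $k \ge \dist_\Gamma(I,\pi^{-1}) = \dist_\Gamma(I,\pi)$ (the last equality by the translation-invariance of $\dist_\Gamma$ together with each generator being an involution). Hence the whole task is the matching upper bound $k \le \dist_\Gamma(I,\pi)$. I would get this by exhibiting, in each case, a nonnegative integer-valued ``potential'' that decreases by at least one at every AK step, that vanishes exactly at $I$ (where AK stops), and whose value at $\pi$ is already known to equal $\dist_\Gamma(I,\pi)$: the function $f_T$ when $T$ is a star, and $\inv$ when $T$ is a path.

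\emph{Star case.} Recall from the description of the AK algorithm that applying the transposition of an admissible edge strictly decreases $f_T$; since $f_T$ is integer valued, each AK step lowers it by at least $1$, and AK halts only at $I$, where $f_T(I)=0$. Hence $k \le f_T(\pi)$. But Theorem~\ref{Thm:iffStar} gives $f_T(\pi) = \dist_\Gamma(I,\pi)$ for a star, so $\dist_\Gamma(I,\pi) \le k \le f_T(\pi) = \dist_\Gamma(I,\pi)$, and the AK factorization has minimum length.

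\emph{Path case.} Now the generators are the adjacent transpositions $(i,i+1)$, and I switch the potential to $\inv$. First, applying any generator to the current permutation swaps two adjacent entries of its one-line notation, hence changes $\inv$ by exactly $\pm 1$; following a shortest path from $I$ to $\pi$ therefore shows $\dist_\Gamma(I,\pi) \ge \inv(\pi)$. Next I claim each AK step decreases $\inv$ by exactly $1$. Let $(i,i+1)$ be the admissible edge used at the current permutation $\sigma$, and use $\dist_T(a,b)=|a-b|$ on the path. If the edge is of type A, the requirement that the markers $\sigma(i)$ and $\sigma(i+1)$ both move closer to their homes becomes $\sigma(i+1)\le i < i+1 \le \sigma(i)$, so $\sigma(i)>\sigma(i+1)$; if the edge is of type B, one of $i,i+1$ is homed and the other marker wants to cross the edge, which again forces $\sigma(i)>\sigma(i+1)$ (either $\sigma(i)=i$ and $\sigma(i+1)<i$, or $\sigma(i+1)=i+1$ and $\sigma(i)>i+1$). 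In every case positions $i$ and $i+1$ form a descent of $\sigma$, so transposing them removes precisely one inversion. Consequently AK takes exactly $k=\inv(\pi)$ steps to reach $I$, and with the lower bound above we again get $\dist_\Gamma(I,\pi)\le k=\inv(\pi)\le \dist_\Gamma(I,\pi)$, i.e. optimality.

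\emph{Main obstacle.} The star case is essentially immediate once Theorem~\ref{Thm:iffStar} is available, and the path case is not deep either; the point needing the most care is the translation, in the path case, of the geometric admissibility conditions (``both markers move closer to home'', ``the non-homed marker wishes to use the edge'') into the single inequality $\sigma(i)>\sigma(i+1)$, together with keeping the side of multiplication consistent so that an AK switch really does correspond to transposing adjacent entries of the one-line notation. Once that bookkeeping is settled, the ``exactly one inversion removed per step'' claim, and hence the theorem, follows.
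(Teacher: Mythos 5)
Your proposal is correct and follows essentially the same route as the paper: on a path it shows every admissible edge is a descent so each AK step removes exactly one inversion, giving $k=\inv(\pi)=\dist_\Gamma(I,\pi)$, and on a star it uses $f_T$ as the potential together with Theorem~\ref{Thm:iffStar}. The only cosmetic difference is that in the star case you invoke strict (hence at-least-one) decrease of the integer-valued $f_T$ per step and sandwich with $f_T(\pi)=\dist_\Gamma(I,\pi)$, whereas the paper checks via its case analysis that the decrease is exactly one; both arguments are sound.
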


\begin{proof}
Let $T$ be the path graph, with the vertices labeled consecutively from 1 to $n$. Let $\pi \in S_n$ be a given permutation.  Then, the AK algorithm chooses, during each step, an admissible edge $(i,i+1)$ of type A or type B.  If the edge is of type A, then the marker $\pi(i)$ at vertex $i$ reduces its distance to vertex $\pi(i)$ if the transposition $(i,i+1)$ is applied, and similarly for the marker $\pi(i+1)$ at vertex $i+1$.  Hence, by the chosen labeling of the vertices, $\pi(i) > \pi(i+1)$.  Thus, applying $(i,i+1)$ reduces the number of inversions of the given permutation by 1.  Similarly, if the edge is of type B, applying $(i,i+1)$ reduces again the number of inversions of the given permutation by 1.  Thus, in either case, after $(i,i+1)$ is applied to $\pi$, we get a new permutation which has exactly one fewer inversions than $\pi$.  Thus, the AK algorithm uses exactly $\inv(\pi)$ transpositions to home all the markers, and it is a well-known result that this is the minimum number $\dist_{\Gamma}(I,\pi)$ of transpositions possible.

Let $T$ be the star. The different cases in the proof of Theorem~\ref{Thm:iffStar} were (a),(b),(c) and (d.1) to (d.4).  Each time a transposition is applied by the AK algorithm, it picks an admissible edge of type A or type B.  If the edge is of type A, then we are in case (d.1) or (d.3), in which case $f_T(\pi)$ surely reduces by 1.  If the edge is of type B, then we are in case (b), in which case $f_T(\pi)$ again surely reduces by 1.  Thus, the AK algorithm sorts $\pi$ using exactly $f_T(\pi)$ transpositions of $T$ and this is the minimum possible number of transpositions by Theorem~\ref{Thm:iffStar}.
\end{proof}

\bigskip
\begin{Theorem} \label{Thm:AK:suboptimal}
There exist transposition trees for which the diameter upper bound is strict.  There exist transposition trees for which the AK algorithm uses more than the minimum number of transpositions required.
\end{Theorem}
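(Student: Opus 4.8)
The plan is to exhibit a single family of transposition trees that witnesses both claims simultaneously, or failing that, two closely related small trees. The natural candidate is a ``broom'' or ``spider'' — a tree that is neither a star nor a path — since Theorems~\ref{Thm:iffStar} and~\ref{Thm:pathfTmax:equals:diam} already pin down exactly the star and the path as the cases where the bounds behave well. Concretely, I would take $T$ to be the tree on $n$ vertices obtained from a path $1,2,3,4$ by attaching additional leaves at one of the internal vertices (say vertex $2$ or $3$), or more simply the smallest genuinely new tree, the spider $S_{1,1,2}$ on $5$ vertices (a central vertex adjacent to two leaves and to one endpoint of a path of length $2$). The first step is to compute $f(T) = \max_{\pi} f_T(\pi)$ for this $T$. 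Since $n$ is small for the base example, this is a finite computation: one searches over $\pi \in S_n$, using that $f_T(\pi) = c(\pi) - n + S_T(\pi)$ and that $S_T(\pi)$ is controlled by the tree distances, and one identifies a maximizing $\pi$ — which will be a product of disjoint transpositions pairing up the vertices that are farthest apart in $T$, so that each contributes a long tree-distance while keeping $c(\pi)$ large.

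The second step is to compute the \emph{true} diameter $\diam(\Gamma)$ for this $T$. Here I would use the AK algorithm (the ``sorting by admissible edges'' procedure described in the excerpt) to get the upper bound $\diam(\Gamma) \le f(T)$, and then I would exhibit, for the specific permutation $\pi^*$ achieving $f(T)$, an explicit factorization of $\pi^*$ into fewer than $f_T(\pi^*)$ transpositions of $T$ — for instance by routing the markers through the central vertex cleverly, reusing edges. Exactly as in the last paragraph of the proof of Theorem~\ref{Thm:iffStar}, where the permutation $(i,k)(j,\ell)$ on a path of length $3$ had $f_T = 6$ but was sortable in $4$ moves, the point is that the distance upper bound overcounts because it treats each marker's journey independently and does not credit ``shared'' switches along the branch point. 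Having $\diam(\Gamma) < f(T)$ for this $T$ proves the first sentence.

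For the second sentence, I would track the AK algorithm's behaviour on the same example: since the AK algorithm only guarantees that $f_T(\pi)$ decreases by at least $1$ per admissible switch (not exactly $1$, unlike the star and path cases in Theorem~\ref{Thm:AK:optimal}), one exhibits a permutation $\pi$ and a run of the algorithm on which some admissible switch is of type (a) or (c) from the proof of Theorem~\ref{Thm:iffStar} — i.e., a switch that \emph{increases} $f_T$, or a type-A switch that decreases it by $1$ but is nonetheless wasteful — so that the total number of switches used exceeds $\dist_\Gamma(I,\pi)$. Alternatively, and more cleanly: pick the $\pi^*$ above, note $\dist_\Gamma(I,\pi^*) < f_T(\pi^*)$, and observe that the AK algorithm, by its own invariant, terminates only when the marker permutation is the identity and can be forced (by choosing the admissible edge adversarially when several are available, which the algorithm permits) to take a suboptimal path; one then just checks on the small tree that at least one legal AK run uses strictly more than $\dist_\Gamma(I,\pi^*)$ transpositions.

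The main obstacle I anticipate is the second sentence rather than the first: showing the AK algorithm is \emph{suboptimal} requires care because the algorithm is only partially specified (it says ``pick some admissible edge'' without a tie-breaking rule), so one must be precise about whether the claim is ``there is a run that is suboptimal'' or ``every run is suboptimal.'' The cleanest route is to find a tree and permutation where \emph{every} admissible edge at the first step leads into a branch that is provably wasteful — e.g., a permutation all of whose admissible edges are of the distance-increasing type (a)/(c), which cannot happen for stars or paths but can for a spider — so that no matter how ties are broken, the AK algorithm overshoots. Verifying that such a configuration exists is a finite but slightly delicate case check on the $5$-vertex spider, and the rest of the proof is routine bookkeeping with $c(\pi)$ and $S_T(\pi)$.
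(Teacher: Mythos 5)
Your overall strategy matches the paper's: both claims are proved by exhibiting small non-star, non-path trees (spiders/brooms) together with explicit permutations, and verifying the numbers by a finite (computer-assisted) check. The paper uses the $5$-vertex spider $\{(1,2),(2,3),(1,4),(1,5)\}$ with $\pi=(2,4)(3,5)$, for which $f_T(\pi)=8$ while $\diam(\Gamma)=7$, to get strictness of the diameter bound; and for AK-suboptimality it uses a $7$-vertex spider with $\pi=(2,4)(3,5)(5,7)$, writing down an explicit sequence of $15$ admissible edges that sorts $\pi$ while the diameter is $14$. Your ``alternative'' route for the second sentence --- exhibit one legal AK run that uses strictly more transpositions than $\dist_\Gamma(I,\pi)$, with the claim read existentially over runs --- is exactly what the paper does, and your remark that the algorithm's nondeterminism forces one to be precise about ``some run'' versus ``every run'' is well taken.

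One step in your plan would fail, however. Your proposed ``cleanest route'' for the second sentence is to find a permutation all of whose admissible edges are ``of the distance-increasing type (a)/(c).'' This cannot happen: by definition an admissible edge is of type A or type B, and the content of the AK bound is precisely that applying such an edge \emph{strictly decreases} $f_T(\pi)$; the cases (a) and (c) in the proof of Theorem~\ref{Thm:iffStar} describe switches that increase $f_T$ and are therefore never admissible. The actual source of suboptimality is subtler: some admissible switches decrease $f_T$ by exactly $1$ even though an optimal sorting would need to make more progress per move, so a run can consume up to $f_T(\pi)$ switches while $\dist_\Gamma(I,\pi)$ is smaller. This also explains why you should not expect a single $5$-vertex example to settle both sentences --- having $f_T(\pi^*)>\dist_\Gamma(I,\pi^*)$ does not by itself produce a suboptimal AK run, since admissible switches may decrease $f_T$ by more than $1$; one must actually display a run, as the paper does on the larger tree. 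With that route replaced by your explicit-run alternative, the argument goes through.
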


\begin{proof}
Let $T$ be the transposition tree on 5 vertices consisting of the 4 transpositions $(1,2),(2,3),(1,4)$ and $(1,5)$.  Let $\pi = (2,4)(3,5) \in S_5$. Then $f_T(\pi)=8$, whereas a quick computer simulation using GAP \cite{GAP4} confirms that the  diameter of the Cayley graph generated by $T$ is 7.  Hence, there exist transposition trees for which the diameter upper bound inequality is strict.

Next, suppose $T$ is the transposition tree on 7 vertices consisting of the 6 transpositions $(1,2),(2,3),(1,4),$ $(4,5),(1,6)$ and $(6,7)$.  Let $\pi=(2,4)(3,5)(5,7) \in S_7$.  Then the following 15 edges of $T$, when applied in the order given, are all admissible edges (of type A or type B), and can be used to sort $\pi$ on $T$: $(1,2),(1,4),(1,2),$ $(2,3),(1,2),(1,6),$ $(6,7),(1,2),(2,3),(4,5),$ $(1,4),(1,6),$ $(6,7),(1,4),(4,5)$.  However, it can be verified (with the help of a computer) that the diameter of the Cayley graph generated by $T$ is 14.  Hence, the AK algorithm can take more than the minimum required number of transpositions to sort a given permutation.
\end{proof}

\section{Strictness of the diameter upper bound}
Recall that the diameter of a Cayley graph $\Gamma$ generated by a transposition tree $T$ is bounded as
$$\diam(\Gamma) \le \max_{\pi \in S_n} \left\{ c(\pi)-n+\sum_{i=1}^n \dist_T(i,\pi(i)) \right\} =: f(T).$$
Let $f(T)$ denote the upper bound in the right hand side of the inequality.  When bounds for the performance of networks are proposed, it is of both theoretical and practical interest to investigate how far away this bound can be from the true value.  We now assess the performance of this bound and derive a strictness result.

Define the worst case performance of this upper bound by the quantity
$$\Delta_n := \max_{T \in \mathcal{T}_n} |f(T) - \diam(\Gamma)|,$$
where $\mathcal{T}_n$ denotes the set of all trees on $n$ vertices.

\begin{Theorem}
For every $n \ge 5$, there exists a tree on $n$ vertices such that the difference between the actual diameter of the Cayley graph and the diameter upper bound is at least $n-4$; in other words, $\Delta_n \ge n-4.$
\end{Theorem}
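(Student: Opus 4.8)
The plan is to pin down, for each $n$, one tree $T_n$ on which the diameter upper bound is far from the truth, by combining an easy lower bound on $f(T_n)$ coming from a single well-chosen permutation with a matching upper bound on $\diam(\Gamma)$ coming from an explicit sorting procedure; the gap between the two numbers will be exactly $n-4$.

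For the construction I would take $T_n$ to be the \emph{double star}: two adjacent vertices $u,v$, with $\lceil (n-2)/2\rceil$ leaves $a_1,a_2,\dots$ attached to $u$ and $\lfloor (n-2)/2\rfloor$ leaves $b_1,b_2,\dots$ attached to $v$. For $n=5$ this is precisely the $5$-vertex tree $(1,2),(2,3),(1,4),(1,5)$ considered above, and in general it is the natural ``double-broom'' enlargement of it. Writing $n=2q+3$ (the case $n=2q+2$ being the same after balancing the two brooms), the permutation that witnesses a large value of $f(T_n)$ is the product of $q+1$ disjoint transpositions $\pi^{*}=(a_1,v)(a_2,b_1)(a_3,b_2)\cdots(a_{q+1},b_q)$. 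Since $\dist_{T_n}(a_1,v)=2$, $\dist_{T_n}(a_{i+1},b_i)=3$ for $1\le i\le q$, and $u$ is the unique fixed point of $\pi^{*}$, a direct computation gives $c(\pi^{*})=q+2$ and $S_{T_n}(\pi^{*})=6q+4$, whence $f_{T_n}(\pi^{*})=(q+2)-(2q+3)+(6q+4)=5q+3$, and therefore $f(T_n)\ge 5q+3$. This step is entirely routine.

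The heart of the matter is the complementary bound $\diam(\Gamma)\le 3q+4$, i.e.\ that every $\sigma\in S_n$ can be written as a word of length at most $3q+4$ in the generating transpositions of $T_n$. The edge $uv$ is the unique edge whose deletion splits $T_n$ into the $u$-broom $B_u=\{u,a_1,\dots,a_{q+1}\}$ and the $v$-broom $B_v=\{v,b_1,\dots,b_q\}$, and each broom together with its hub is a star, on which the AK algorithm is optimal with the known star bound ($\lfloor 3(q+1)/2\rfloor$ and $\lfloor 3q/2\rfloor$ respectively). I would sort $\sigma$ by a \emph{pipelined} procedure through the centre: repeatedly take a marker that currently lies in the wrong broom, bring it to that broom's hub (one star-move), push it over $uv$ (one move), and carry it on towards its home leaf (one star-move), while timing the applications of $(u,v)$ so that each of them transports one marker in each direction and so that the transit of a marker through $u$ or through $v$ simultaneously performs useful work in the star-sort of that broom. (The six-move sequence $(1,5),(1,2),(2,3),(1,4),(1,2),(1,5)$ sorting $(2,4)(3,5)$ on the $n=5$ tree is exactly the $q=1$ instance of this pipeline.) The main obstacle is the bookkeeping needed to make the bound tight: one must show that the ``crossing overhead'' is fully absorbed into the two star-sorts --- in particular that a broom from which $r$ markers are about to depart needs correspondingly fewer internal moves --- so that the total never exceeds $\lfloor 3(q+1)/2\rfloor+\lfloor 3q/2\rfloor$ plus a bounded term, giving $3q+4$. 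It is here, not in the AK analysis, that the slack over the AK bound is extracted, and getting this constant exactly right (not merely of the right order of magnitude) is the delicate part of the argument.

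Combining the two estimates, $\Delta_n\ge f(T_n)-\diam(\Gamma)\ge (5q+3)-(3q+4)=2q-1=n-4$ for odd $n$, and the parallel computation with $\pi^{*}=(u,v)(a_1,b_1)\cdots(a_q,b_q)$ gives $(5q+1)-(3q+3)=2q-2=n-4$ for $n=2q+2$; in either case $\Delta_n\ge n-4$, as claimed.
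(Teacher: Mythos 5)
There is a genuine gap here, and in fact the construction itself fails for $n\ge 6$. The first half of your argument is correct and routine: for the double star $T_n$ the permutation $\pi^{*}$ does give $c(\pi^{*})=q+2$, $S_{T_n}(\pi^{*})=6q+4$ and hence $f(T_n)\ge 5q+3$ when $n=2q+3$ (and $5q+1$ when $n=2q+2$). Everything then rests on the claimed bound $\diam(\Gamma)\le 3q+4$, and this is exactly the step you do not prove --- you describe a pipelined sort and then concede that showing the crossing overhead is ``fully absorbed'' into the two star-sorts is the delicate part. So as written the proposal does not establish the theorem. Worse, the claimed bound is false for $n\ge 6$: the paper's Table~1 gives $\Delta_6=2$, $\Delta_7=3$ and $\gamma_6=\gamma_7=1$, i.e.\ for $n=6,7$ there is a \emph{unique} tree attaining the gap $n-4$, and since the paper's own tree (a path $1,2,\ldots,n-2$ with an extra leaf at $n-2$) is proved to attain it, that unique tree is the paper's tree, not the double star (the degree sequences differ). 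Consequently the double star on $6$ (resp.\ $7$) vertices has gap at most $n-5$; since $f(T_6)=11$ and $f(T_7)=13$, its Cayley graph has diameter at least $10$ (resp.\ at least $11$), contradicting your claimed $3q+3=9$ (resp.\ $3q+4=10$). The agreement at $n=5$, where your double star coincides with the paper's $5$-vertex example and your $6$-move sort of $(2,4)(3,5)$ is valid, is a low-order coincidence that does not persist: the two hubs are a bottleneck through which roughly $2q$ markers must transit in each direction, and the congestion there (every transposition at $u$ advances the queue at $u$ by only one marker, while the internal star-sorting of each broom competes for the same hub) contributes an overhead that grows with $q$ rather than staying bounded.

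The paper distributes the difficulty in the opposite way. It takes $T$ to be a path with one extra leaf at the end, for which the diameter upper bound is the \emph{easy} half: homing the markers one at a time along the shrinking tree gives $\diam(\Gamma)\le (n-2)+(n-3)+\cdots+3+4=\binom{n-1}{2}+1$ with no scheduling subtleties, because there is no bottleneck to pipeline through. All the work then goes into the exact evaluation $f(T)=\binom{n}{2}-2$, via a case analysis on the cycle containing $1$ and $n$ that reduces each permutation to one containing $(1,n)$ as a cycle and invokes the path result (Theorem~\ref{Thm:pathfTmax:equals:diam}). If you want to salvage your approach you would need either to switch to a tree for which a crude one-marker-at-a-time bound already beats $f(T)$ by $n-4$, or to prove a genuinely tight diameter upper bound for the double star --- and the computational data indicate that no bound of the form $3q+O(1)$ is available there.
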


\begin{proof}
Throughout this proof, we let $T$ denote the transposition tree defined by the edge set $\{(1,2),(2,3),\ldots,(n-3,n-2),(n-2,n-1),(n-2,n)\}$, which is shown in Figure~\ref{fig:tree:in:proof}.  For conciseness, we let $d(i,j)$ denote the distance in $T$ between vertices $i$ and $j$. Also, for leaf vertices $i,j$ of $T$, we let $T-\{i,j\}$ denote the tree on $n-2$ vertices obtained by removing vertices $i$ and $j$ of $T$.

Our proof is in two parts.  In the first part we establish that $f(T)$ is equal to ${n \choose 2}-2$.  In the second part we show that the diameter of the Cayley graph generated by $T$ is at most ${{n-1} \choose 2}+1$.  Together, this yields the desired result.
\begin{center}
\begin{figure}
\begin{pspicture}(-3,0)(5,2)
\qdisk(0,1){\vs}
\qdisk(1,1){\vs}
\qdisk(2,1){\vs}
\qdisk(3,1){\vs}
\qdisk(4,1){\vs}
\qdisk(5,2){\vs}
\qdisk(5,0){\vs}
\psline[linewidth=\et]{-}(0,1)(1,1)
\psline[linewidth=\et]{-}(1,1)(2,1)
\uput[270](2.5,1.2){$\ldots$}
\psline[linewidth=\et]{-}(3,1)(4,1)
\psline[linewidth=\et]{-}(4,1)(5,2)
\psline[linewidth=\et]{-}(4,1)(5,0)

\uput[270](0,1){$1$}
\uput[270](1,1){$2$}
\uput[270](2,1){$3$}
\uput[270](3,1){$n-3$}
\uput[0](4,1){$n-2$}
\uput[0](5,2){$n-1$}
\uput[0](5,0){$n$}
\end{pspicture}
\caption{A transposition tree $T$ on $n$ vertices.} \label{fig:tree:in:proof}
\end{figure}
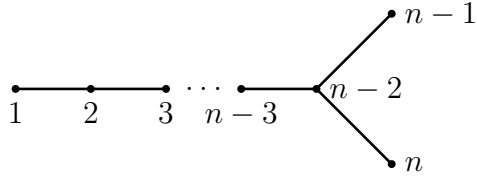
\end{center}

We now present the first part of the proof; we establish that $f(T)$, defined by
$$f(T) := \max_{\sigma \in S_n} \left\{ c(\sigma)-n+\sum_{i=1}^n \dist_T(i,\sigma(i)) \right\},$$
is equal to ${n \choose 2}-2$.  We prove this result by examining several sub-cases.  Define
$$f_T(\sigma) := c(\sigma)-n+S_T(\sigma),~~~S_T(\sigma):=\sum_{i=1}^n \dist_T(i,\sigma(i)).$$
We consider two cases, (1) and (2), depending on whether 1 and $n$ are in the same or different cycle of $\sigma$; each of these cases will further involve subcases.  In most of these subcases, we show that for a given $\sigma$, there is a $\sigma'$ such that $f_T(\sigma) \le f_T(\sigma')$ and $f_T(\sigma') \le {n \choose 2}-2$.

(1)  Assume 1 and $n$ are in the same cycle of $\sigma$.  So $\sigma=(1,k_1,\ldots,k_s,n,j_1,\ldots,j_\ell) \hat{\sigma}$.  The different subcases consider the different possible values for $s$ and $\ell$.

(1.1) Suppose $s=0,\ell=0$. So $\sigma=(1,n) \hat{\sigma} = (1,n) \sigma_2 \ldots \sigma_r$.
Then, $f_T(\sigma)=c(\sigma)-n+S_T(\sigma) = r-n+2(n-2)+S_{T-\{1,n\}}(\hat{\sigma})=2n-5+(r-2)+(n-2)+S_{T-\{1,n\}}(\hat{\sigma}) = 2n-5+c(\hat{\sigma})+(n-2)+S_{T-\{1,n\}}(\hat{\sigma}) = 2n-5+f_{T-\{1,n\}}(\hat{\sigma}) \le  2n-5+{{n-2} \choose 2} = {n \choose 2}-2$, where by Theorem~\ref{Thm:pathfTmax:equals:diam} the inequality holds with equality for some $\hat{\sigma}$. Thus, the maximum of $f_T(\sigma)$ over all permutations that contain $(1,n)$ as a cycle is equal to ${n \choose 2}-2$.  It remains to show that for all other kinds of permutations $\sigma$ in the symmetric group $S_n$, $f_T(\sigma) \le {n \choose 2}-2$.

(1.2) Suppose $s=1, \ell=0$.  So $\sigma=(1,i,n) \sigma_2 \ldots \sigma_r = (1,i,n) \hat{\sigma}$. We consider some subcases.

(1.2.1)  Suppose $i=n-1$.  Then, $f_T(\sigma) = r-n+(2n-2)+S_{T-\{1,n-1,n\}}(\hat{\sigma}) = 2n-4+f_{T-\{1,n-1,n\}}(\hat{\sigma}) \le 2n-4+{{n-3} \choose 2} \le {n \choose 2}-2$, where the inequality is by Theorem~\ref{Thm:pathfTmax:equals:diam}.

(1.2.2)  Suppose $2 \le i \le n-2$;  so $\sigma=(1,i,n) \hat{\sigma}$.  Let $\sigma' = (1,n)(i) \hat{\sigma}$.  It is easily verified that $f_T(\sigma) \le f_T(\sigma')$, and so the desired bound follows from applying subcase (1.1) to $f_T(\sigma')$.

 (1.3) Suppose $s=0, \ell=1$, so $\sigma=(1,n,i) \hat{\sigma}$. Since $f_T(\sigma) =  \ f_T(\sigma^{-1})$, this case also is settled by (1.2).

 (1.4)  Suppose $s=0, \ell \ge 2$, so $\sigma = (1,n,j_1,\ldots,j_\ell) \hat{\sigma}$.  Let $\sigma' = (1,n)(j_1,\ldots,j_\ell) \hat{\sigma}$. Observe that $f_T(\sigma) \le f_T(\sigma')$ iff $d(n,j_1)+d(j_\ell,1) \le d(n,1)+d(j_\ell,j_1)+1$.  We prove the latter inequality by considering 4 subcases:

(1.4.1) Suppose $j_1 < j_\ell \le n-2$.  Then, an inspection of the tree in Figure~\ref{fig:tree2:in:proof} shows that $d(n,j_1)+d(j_\ell,1) = d(n,1)+d(j_\ell,j_1)$, and so the inequality holds.
\begin{center}
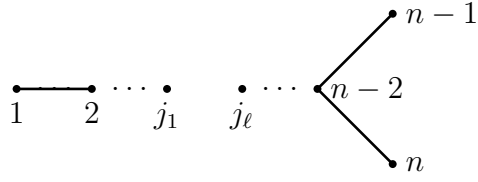
\begin{figure}
\begin{pspicture}(-3,0)(5,2)
\qdisk(0,1){\vs}
\qdisk(1,1){\vs}
\qdisk(2,1){\vs}
\qdisk(3,1){\vs}
\qdisk(4,1){\vs}
\qdisk(5,2){\vs}
\qdisk(5,0){\vs}
\psline[linewidth=\et]{-}(0,1)(1,1)
\uput[270](0.5,1.2){$\ldots$}
\uput[270](1.5,1.2){$\ldots$}
\uput[270](3.5,1.2){$\ldots$}
\psline[linewidth=\et]{-}(4,1)(5,2)
\psline[linewidth=\et]{-}(4,1)(5,0)

\uput[270](0,1){$1$}
\uput[270](1,1){$2$}
\uput[270](2,1){$j_1$}
\uput[270](3,1){$j_\ell$}
\uput[0](4,1){$n-2$}
\uput[0](5,2){$n-1$}
\uput[0](5,0){$n$}
\end{pspicture}
\caption{Positions of $j_1$ and $j_\ell$ arising in subcase (1.4.1).} \label{fig:tree2:in:proof}
\end{figure}
\end{center}
(1.4.2)  Suppose $j_1 > j_\ell$ and $j_1,j_\ell \le n-2$.  Then, $d(n,j_1)+d(j_\ell,1) \le d(1,n)$, and so the inequality holds.

(1.4.3) Suppose $j_1=n-1$.  Then $d(n,j_1)=2$.  Also, $d(j_\ell,1) \le d(n,1)$ and $d(j_\ell,j_1) \ge 1$, and so the inequality holds.

(1.4.4)  Suppose $j_\ell = n-1$.  Then, $d(j_\ell,1)=d(n,1)$ and $d(n,j_1) = d(j_\ell,j_1)$, and so again the inequality holds.

 (1.5) Suppose $s=1, \ell=1$, so $\sigma=(1,i,n,j) \hat{\sigma}$.  Let $\sigma'=(1,n)(i,j) \hat{\sigma}$.

 (1.5.1)  If $i=n-1$, by symmetry in $T$ between vertices $n$ and $n-1$, this subcase is resolved by subcase (1.4).

 (1.5.2) Let $2 \le i \le n-2$.  Then $d(1,i)+d(i,n)=d(1,n)$. So $f_T(\sigma) \le f_T(\sigma')$ iff $d(n,j)+d(j,1) \le d(1,n)+d(i,j)+d(j,i)+1$, which is true since $d(n,j)+d(j,1) \le d(1,n)+2$.

(1.6) Suppose $s=1, \ell \ge 2$. So $\sigma=(1,i,n,j_1,\ldots,j_\ell)\hat{\sigma}$.  Let $\sigma' = (1,n)(i,j_1,\ldots,j_\ell) \hat{\sigma}$.  It suffices to show that $f_T(\sigma) \le f_T(\sigma')$, i.e., that $d(1,i)+d(i,n)+d(n,j_1)+d(j_\ell,1) \le d(1,n)+d(1,n)+d(i,j_1)+d(j_\ell,i)+1$.  We examine the terms of this latter inequality for various subcases:

(1.6.1) Suppose $2 \le i \le n-2$.  Then $d(1,i)+d(i,n)=d(1,n)$.

(1.6.1a)  If $j_\ell = n-1$, then $d(j_\ell,i)=n-i-1$ and $d(i,j_1)=|i-j_1|$, and so the inequality holds iff $-1 \le |j_1-i|+j_1-i$, which is clearly true.

(1.6.1b) Suppose $2 \le j_\ell \le n-2$.  Then, the inequality holds iff $d(n,j_1)+j_\ell-1 \le n-2+|i-j_1|+|i-j_\ell|+1$, which can be verified separately for the cases $j_1=n-1$ and $2 \le j_1 \le n-2$.

(1.6.2) Suppose $i=n-1$.  By symmetry in $T$ of the vertices $n$ and $n-1$, this case is resolved by (1.4).

(1.7) Suppose $s \ge 2, \ell=0$, so $\sigma=(1,k_1,\ldots,k_s,n) \hat{\sigma}$.  Since $f_T(\sigma)=f_T(\sigma^{-1})$, this case is resolved by (1.4).

(1.8) Suppose $s \ge 2, \ell=1$, so $\sigma=(1,k_1,\ldots,k_s,n,j_1) \hat{\sigma}$.  Let $\sigma'=(1,n)(k_1,\ldots,k_s,j_1) \hat{\sigma}$.   We can assume that $2 \le k_1 \le n-2$ since the $k_1=n-1$ case is resolved by (1.4) due to the symmetry in $T$. To show $f_T(\sigma) \le f_T(\sigma')$, it suffices to prove the inequality $d(1,k_1)+d(k_s,n)+d(n,j_1)+d(j_1,1) \le d(1,n)+d(n,1)+d(k_s,j_1)+d(j_1,k_1)+1$.  We prove this inequality by separately considering whether $j_1=n-1$ or $k_s=n-1$ or neither:

(1.8.1) Suppose $j_1=n-1$.  Substituting $d(k_s,n)=n-k_s-1, d(n,j_1)=2, d(j_1,1)=j_1-1$, etc, we get that the inequality holds iff $k_1 \le |j_1-k_1|+n-2$, which is clearly true.

(1.8.2)  Suppose $2 \le j_1 \le n-2$.  Then $d(n,j_1)=n-j_1-1$, and so the inequality holds iff $k_1+d(k_s,n)+n-3 \le 2n-3+d(k_s,j_1)+d(j_1,k_1)$.  If $k_s=n-1$, this reduces to $j_1+k_1 \le 2n-3+|j_1-k_1|$, and is true, whereas if $2 \le k_s \le n-2$, this reduces to $k_1-k_s \le 1+|j_1-k_s|+|j_1-k_1|$, which is true due to the triangle inequality.

(1.9) Suppose $s,l \ge 2$, so $\sigma=(1,k_1,\ldots,k_s,n,j_1,\ldots,j_\ell) \hat{\sigma}$.

Let $\sigma'=(1,k_1,\ldots,k_s,n)(j_1,\ldots,j_\ell) \hat{\sigma}$.  It suffices to show that $S_T(\sigma) \le S_T(\sigma')+1$, i.e., that $d(n,j_1)+j_\ell \le n+d(j_1,j_\ell)$.

(1.9.1) If $j_1 < j_\ell$, then $j_1 \le n-2$, and so $d(n,j_1) =n-j_1-1$ and $d(j_1,j_\ell)=j_\ell-j_1$; the inequality thus holds.

(1.9.2) If $j_1 > j_\ell$, then $d(j_1,j_\ell)=j_1-j_\ell$, and so it suffices to show that $d(n,j_1) \le n+j_1-2j_\ell$.  It can be verified that this holds if $j_1=n-1$ and also if $2 \le j \le n-2$.

(2) Now suppose 1 and $n$ are in different cycles of $\sigma$.  So let $\sigma=(1,k_1,\ldots,k_s)(n,j_1,\ldots,j_\ell) \hat{\sigma}$.

(2.1) Suppose $s=0$.  Then $f_T(\sigma) \le {{n-1} \choose 2}-2$, by induction on $n$.

(2.2) Suppose $s=1$.  So let $\sigma=(1,i)(n,j_1,\ldots,j_\ell)\hat{\sigma}$.  By symmetry in $T$ between vertices $n$ and $n-1$ and subcase (1.1), we may assume $i \ne n-1$. Let $\sigma' = (1,n)(i,j_1,\ldots,j_\ell)\hat{\sigma}$.  It suffices to show that $S_T(\sigma) \le S_T(\sigma')$.  If $\ell=0$ this is clear since $d(1,i) \le d(1,n)$.  Suppose $\ell \ge 2$.  Then, by the triangle inequality, $d(n,j_1)+d(j_\ell,n) \le d(j_1,i)+d(i,n)+d(i,j_\ell)+d(i,n) = d(j_1,i)+d(i,j_\ell)+(n-i-1)2$.  Also, $d(1,n)=d(1,i)+d(i,n) = d(1,i)+n-i-1$.  Hence, $2d(1,i)+d(n,j_1)+d(j_\ell,n) \le 2d(1,n)+d(i,j_1)+d(j_\ell,i)$.  Hence, $S_T(\sigma) \le S_T(\sigma')$.  The case $\ell=1$ can be similarly resolved by substituting $j_1$ for $j_\ell$ in the $l\ge2$ case here.

(2.3) Suppose $s \ge 2, \ell=0$.  Then, by Theorem~\ref{Thm:pathfTmax:equals:diam}, $f_T(\sigma) \le {{n-1} \choose 2}$.

(2.4) Suppose $s \ge 2, \ell=1$, so $\sigma=(1,k_1,\ldots,k_s)(n,j_1) \hat{\sigma}$.  Let $\sigma'=(1,n)(k_1,\ldots,k_s,j_1)\hat{\sigma}$.  It suffices to show that $d(1,k_1)+d(1,k_s)+2d(n,j_1) \le 2d(1,n)+d(k_s,j_1)+d(k_1,j_1)$.  This inequality is established by considering the two subcases:

(2.4.1)  Suppose $j_1=n-1$.  Then the inequality holds iff $2k_s+k_1 \le 3n-7+|k_1-j_1|$, which is true since $k_1,k_2 \le n-2$ and $|k_1-j_1| \ge 1$.

(2.4.2) Suppose $j_1 \ne n-1$.  Then the inequality holds iff $k_1-j_1+k_s-j_1 \le |k_1-j_1|+|k_s-j_1|$, which is clearly true.

(2.5) Suppose $s, \ell \ge 2$, so $\sigma=(1,k_1,\ldots,k_s)(n,j_1,\ldots,j_\ell)\hat{\sigma}$.

Let $\sigma'=(1,n)(k_1,\ldots,k_s,j_1,\ldots,j_\ell)\hat{\sigma}$.  To show $f_T(\sigma) \le f_T(\sigma')$, it suffices to show that $d(1,k_1)+d(k_s,1)+d(n,j_1)+d(j_\ell,n) \le 2d(n,1)+d(k_s,j_1)+d(j_\ell,k_1)$.  By symmetry in $T$ between vertices $n$ and $n-1$, we may assume $k_1,\ldots,k_s \ne n-1$, since these cases were covered in (1).  We establish this inequality as follows:

(2.5.1)  Suppose $j_1=n-1$.  Then $d(n,j_1)=2$ and $d(n,j_\ell)=n-j_\ell-1$.  So the inequality holds iff $2k_s \le 2(n-2)+|j_\ell-k_1|+j_\ell-k_1$, which is true since $k_s \le n-2$ and $|j_\ell-k_1|+j_\ell-k_1 \ge 0$.

(2.5.2)  Suppose $j_1 \ne n-1$.  Then $d(n,j_1)=n-j_1-1$.  If $j_\ell=n-1$, the inequality holds iff $2 k_1 \le 2(n-2)+j_1-k_s+|j_1-k_s|$,which is true since $k_1 \le n-2$.  If $j_\ell \ne n-1$, the inequality holds iff $k_s-j_1+k_1-j_\ell \le |k_s-j_1|+|k_1-j_\ell|$, which is true.

This concludes the first part of the proof.

We now provide the second part of the proof.  Let $\Gamma$ be the Cayley graph generated by $T$.  We show that $\diam(\Gamma) \le {{n-1} \choose 2}+1.$  Let $\pi \in S_n$, and suppose each vertex $i$ of $T$ has marker $\pi(i)$.  We show that all markers can be homed using at most the proposed number of transpositions.  Since $\diam(T)=n-2$, marker 1 can be moved to vertex 1 using at most $n-2$ transpositions.  Now remove vertex 1 from the tree $T$, and repeat this procedure for marker 2, and then for marker 3, and so on, removing each vertex from $T$ after its marker is homed.  Continuing in this manner, we eventually arrive at a star $K_{1,3}$, whose Cayley graph has diameter 4.  Hence, the diameter of $\Gamma$ is at most $[(n-2)+(n-3)+\ldots+5+4+3]+4 = {{n-1} \choose 2}+1$.  This completes the proof.
\end{proof}

Let $s(n)$ denote the number of non-isomorphic trees on $n$ vertices and let $h(n)$ denote the number of nonisomorphic trees on $n$ vertices for which the diameter upper bound is sharp.  Let $\Delta_n$ be the strictness as defined above, and let $\gamma(n)$ denote the number of nonisomorphic trees on $n$ vertices for which the difference between the diameter upper bound and the true diameter is equal to $\Delta_n$.  Then, computer simulations yield the results in Table~\ref{table:summary}:
\begin{table}[ht]
\caption{Number of trees for which the bound $f(T)$ is sharp, and strictness} 
\centering 
\begin{tabular}{c | c c c c c} 
$n$ & 5 & 6 & 7 & 8 & 9 \\
\hline 
$s(n)$ & 3 & 6 & 11 & 23 & 47 \\
$h(n)$ & 2 & 4 & 3 & 6 & 4 \\
$\Delta_n $ & 1 & 2 & 3 & 4 & 6 \\
$\gamma_n $ & 1 & 1 & 1 & 3 & 2
\end{tabular}
\label{table:summary} 
\end{table}

These results imply that the $n-4$ lower bound for $\Delta_n$ is best possible, and an open problem is to obtain an upper bound for $\Delta_n$.  Another open problem is to classify the remaining families of trees for which the diameter upper bound is sharp.

\section{An algorithm for the diameter of Cayley graphs generated by transposition trees}
\label{sec:algorithm}

We now provide an algorithm that takes as its input a transposition tree $T$ on $n$ vertices and provides as output an estimate of the diameter of the Cayley graph (on $n!$ vertices) generated by $T$.  We then prove that the value obtained by our algorithm is less than or equal to the previously known diameter upper bound $f(T)$.  The notation used to describe our algorithm should be self-explanatory and is similar to that used in Knuth \cite{Knuth:2011}.

\bigskip
\noindent \textbf{Algorithm A}
\\Given a transposition tree $T$, this algorithm computes a value $\beta$ which is an estimate for the diameter of the Cayley graph generated by $T$.  $|V(T)|$ denotes the current value of the number of vertices in $T$; initially, $V(T)=\{1,2,\ldots,n\}$.
\\ \textbf{A1.} [Initialize.]  \\Set $\beta \leftarrow 0$.
\\ \textbf{A2.} [Find two vertices $i,j$ of $T$ that are a maximum distance apart.]  \\Find any two vertices $i,j$ of $T$ such that $\dist_T(i,j)=\diam(T)$.
\\ \textbf{A3.} [Update $\beta$, and remove $i,j$ from $T$.] \\Set $\beta \leftarrow \beta+(2 \diam(T)-1)$, and set $T \leftarrow T-\{i,j\}$.  If $T$ still has 3 or more vertices, return to step A2; otherwise, set $\beta \leftarrow \beta+|V(T)|-1$ and terminate this algorithm. \qed

\bigskip One way to implement step A2, which picks any two vertices of the tree that are a maximum distance apart, is as follows.  Assign label 1 to each leaf vertex, and then remove all the leaf vertices from the tree.  Then all the leaf vertices of the smaller tree can be assigned label 2, and so on, until we arrive at the center of the tree, which has (exactly one or two) vertices of the highest label.  One can then start at the center and use the stored labels to construct a path of maximum length in the tree.

We now show that the value obtained by Algorithm A is less than or equal to the previously known diameter upper bound.

\begin{Theorem} \label{beta:le:AKbound}
Let $T$ be a transposition tree on vertex set $\{1,2,\ldots,n\}$, and let $\beta$ be the value obtained by Algorithm A for input $T$. Then,
$$\beta \le \max_{\pi \in S_n}  \left\{c(\pi)-n+\sum_{i=1}^n \dist_T(i,\pi(i)) \right\}.$$
\end{Theorem}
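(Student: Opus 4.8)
The plan is to show that the value $\beta$ produced by Algorithm~A is itself realized as $f_T(\pi)$ for a suitably chosen permutation $\pi \in S_n$; since $f(T)$ is the maximum of $f_T(\pi)$ over all $\pi$, this immediately gives $\beta \le f(T)$. The key observation is that Algorithm~A builds $\beta$ by repeatedly selecting a diametral pair of vertices $i,j$ in the current tree, adding $2\,\dist_T(i,j)-1$, and deleting $\{i,j\}$, until only a star $K_{1,2}$ (a path on $3$ vertices) or a $K_{1,1}$ (an edge on $2$ vertices) or a single vertex remains, at which point it adds $|V(T)|-1$. This structure suggests building $\pi$ as a product of disjoint transpositions $(i_1,j_1)(i_2,j_2)\cdots(i_m,j_m)$, one for each diametral pair peeled off, possibly together with one extra short cycle on the final $2$ or $3$ vertices.

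First I would verify the accounting for a single transposition $\tau = (i,j)$ contributing to a permutation $\pi$ that is otherwise the identity on $\{i,j\}$'s "round": a transposition $(i,j)$ adds $1$ to $c(\pi)-n$ relative to the identity (since it merges the reasoning: $c$ of a single transposition on top of fixed points is $n-1$, so $c(\pi)-n = -1$ for one transposition, i.e. each disjoint transposition contributes $c$-deficit... let me restate) — more carefully, for $\pi$ a product of $m$ disjoint transpositions we have $c(\pi) = n-m$, so $c(\pi)-n = -m$, while $\sum_i \dist_T(i,\pi(i)) = \sum_{t=1}^m 2\,\dist_T(i_t,j_t)$. Hence $f_T(\pi) = -m + \sum_{t=1}^m 2\,\dist_T(i_t,j_t) = \sum_{t=1}^m \bigl(2\,\dist_T(i_t,j_t) - 1\bigr)$, which is exactly the sum of the increments $\beta$ accrues in the repeated step~A3. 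So I would choose $\pi$ to be the product over all peeled diametral pairs $(i_t,j_t)$ of the transpositions $(i_t,j_t)$, and then handle the terminal contribution $|V(T)|-1$ separately: if a $K_{1,2}$ remains with vertices $a,b,c$ I would append a $3$-cycle whose $f_T$-contribution is $2$ (matching $|V(T)|-1 = 2$), and if an edge $ab$ remains I would append the transposition $(a,b)$ whose contribution is $2\cdot 1 - 1 = 1 = |V(T)|-1$; a single leftover vertex contributes $0$. The point is that the distances $\dist_T(i_t,j_t)$ used by the algorithm are distances in the successively pruned trees, but since each pruning only deletes leaves of the current diametral path, the distance between any two surviving vertices is unchanged, so $\dist_T$ in the original tree agrees with the value the algorithm used — this requires a small lemma that removing a diametral pair of a tree does not change distances among the remaining vertices (indeed a diametral pair consists of two leaves, and deleting leaves never changes distances among non-deleted vertices).

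The main obstacle I anticipate is making the bookkeeping airtight: I must confirm that the vertex pairs $\{i_1,j_1\},\ldots,\{i_m,j_m\}$ selected across the iterations are pairwise disjoint (so that the product of transpositions is genuinely a product of disjoint cycles with $c(\pi)=n-m$), and that the terminal vertices are disjoint from all of them — this is immediate since step~A3 removes $i,j$ from $T$ before the next iteration, but it should be stated. A secondary subtlety is whether the terminal star $K_{1,2}$ that the algorithm reaches always has its central vertex at distance $1$ from each leaf so that a $3$-cycle on those three vertices contributes $\dist_T(a,b)+\dist_T(b,c)+\dist_T(c,a) = 1+1+2 = 4$ giving $f_T$-contribution $4 - 1 = 3 \ne 2$; to avoid this mismatch I would instead append the $3$-cycle as two disjoint transpositions only if that undercounts, or more simply choose on the three leftover vertices $a,b,c$ (say $b$ central) the permutation $(a,b)$ together with fixing $c$, contributing $2\cdot1 - 1 = 1$, which is less than $2$ — but then $\beta$ would exceed $f_T(\pi)$, so that does not work either. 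The correct fix: on three remaining vertices forming a path $a-b-c$, take the permutation $\pi$ restricted to them to be $(a,c)$, whose contribution is $c$-deficit $-1$ plus $\dist_T(a,c) = 2$ twice... no, $(a,c)$ is one transposition, contributing $2\,\dist_T(a,c) - 1 = 2\cdot 2 - 1 = 3$. Hmm — so the honest route is: whenever the remaining tree is a $K_{1,2}$, note $|V(T)|-1 = 2$, and realize $2$ as the $f_T$-contribution of the $3$-cycle $(a,b,c)$ only if $T$ is a path $a-b-c$ gives $4-1 = 3$; that fails. Therefore the genuinely safe statement is the inequality, not equality: I would simply argue $\beta \le f_T(\pi)$ for the $\pi$ built from the disjoint transpositions of all diametral pairs together with a $3$-cycle on the final three vertices (which overshoots the terminal term by exactly $1$), and hence $\beta \le f_T(\pi) \le f(T)$. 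This slack is harmless for the theorem, and I expect the clean writeup to exploit precisely this: construct one explicit $\pi$, compute $f_T(\pi)$, check term-by-term that it is at least $\beta$, and conclude.
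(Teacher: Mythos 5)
Your core construction is exactly the paper's: take $\pi$ to be the product of the disjoint transpositions $(i_t,j_t)$ over the diametral pairs peeled off by the algorithm, compute $f_T(\pi)=\sum_t\bigl(2\dist_T(i_t,j_t)-1\bigr)$ plus a terminal contribution, and match this against $\beta$. Your two ``bookkeeping'' points are also the right ones to make explicit --- the pairs are disjoint because A3 deletes them, and the distances the algorithm uses in the pruned trees agree with distances in the original $T$ because a diametral pair consists of leaves and deleting leaves preserves distances among surviving vertices (the paper uses this silently).

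Where you go astray is the terminal case, and the difficulty you wrestle with there is self-inflicted. First, you misread the termination condition: step A3 returns to A2 whenever the pruned tree still has $3$ or more vertices, so the algorithm never halts on a $K_{1,2}$; it halts only when $1$ or $2$ vertices remain, and the terminal term $|V(T)|-1$ is $0$ or $1$. A single leftover vertex is a fixed point contributing $0$, and two leftover vertices are adjacent in $T$ (the pruned tree is still a tree), so the transposition on them contributes $2\cdot 1-1=1$; in both cases the match is exact and $\beta=f_T(\pi)$, which is what the paper proves. Second, even in your imagined $3$-vertex terminal case your arithmetic is off: a $3$-cycle replaces three fixed points ($3$ cycles) by one cycle, so its $c$-deficit is $2$, not $1$, and on a path $a$--$b$--$c$ its contribution would be $4-2=2=|V(T)|-1$, again an exact match rather than an overshoot. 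Your eventual retreat to the inequality $\beta\le f_T(\pi)$ still yields the theorem, so the proposal is not wrong in its conclusion, but the final third of the argument is resolving a contradiction that does not exist.
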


\begin{proof}
Let $\{i_1,j_1\},\{i_2,j_2\},\ldots,\{i_r,j_r\}$ be the vertex pairs chosen by Algorithm A during the $r$ iterations of step A2.  We now construct a permutation $\pi$ as follows.  If $n$ is odd, then $T$ contains only one vertex, say $i_{r+1}$ when the algorithm terminates.  In this case, we let $\pi = (i_1,j_1) \ldots (i_r,j_r)(i_{r+1}) \in S_n$.  If $n$ is even, then $T$ contains two vertices, say $i_{r+1},j_{r+1}$, when the algorithm terminates.  In this case, we let $\pi=(i_1,j_1) \ldots (i_{r+1},j_{r+1}) \in S_n$.  In either case, $r+1 = \lceil n/2 \rceil$, the value of $\beta$ computed by the algorithm equals
$$\beta=\left(\sum_{\ell=1}^r \left\{2 \dist_T(i_\ell,j_\ell)-1 \right\} \right)+\left[ (n+1) \mod 2 \right],$$
and the quantity $f_T(\pi) := c(\pi)-n+\sum_{i=1}^n \dist_T(i,\pi(i))$ evaluates to
$$f_T(\pi)=(r+1)-n+ \left(2 \sum_{\ell=1}^r \dist_T(i_\ell,j_\ell)+2 \left[(n+1) \mod 2 \right] \right).$$
A quick check shows that the two expressions for $\beta$ and $f_T(\pi)$ are equal.  Hence, $\beta \le \max_{\pi \in S_n} f_T(\pi)$.
\end{proof}

Note that we have not established that the value computed by Algorithm A is unique (and in fact, it isn't sometimes); for a given tree, there can exist more than one pair of vertices that are a maximum distance apart, and different vertex pairs chosen during step A2 can sometimes yield different values of $\beta$.  Let $\mathcal{B}$ denote the set of possible values that can be the output of Algorithm A, and let $\beta_{\max} := \max_{\beta \in \mathcal{B}} \beta$. It follows immediately from Theorem~\ref{beta:le:AKbound} that $\beta_{\max}$ is less than or equal to the previously known diameter upper bound.  On the other side, we now show that $\beta_{\max}$ is lower bounded by the true diameter of the Cayley graph $\Gamma$:

\begin{Theorem}
Let $\Gamma$ be the Cayley graph generated by a transposition tree $T$.  Let $\beta_{\max}$ be as defined above and equal to the maximum possible value returned by Algorithm A.  Then,
$$\diam(\Gamma) ~\le~ \beta_{\max}~ \le~ \max_{\pi \in S_n} \left\{c(\pi)-n+\sum_{i=1}^n \dist_T(i,\pi(i)) \right\}.$$
\end{Theorem}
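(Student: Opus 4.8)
Of the two inequalities, the right-hand one requires no new work: $\beta_{\max}$ is one of the values Algorithm~A can return, and Theorem~\ref{beta:le:AKbound} shows that every value returned by Algorithm~A is at most $\max_{\pi\in S_n}\{c(\pi)-n+\sum_{i=1}^n\dist_T(i,\pi(i))\}$, hence so is $\beta_{\max}$. So the content of the theorem is the left-hand inequality $\diam(\Gamma)\le\beta_{\max}$.

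I would prove $\diam(\Gamma)\le\beta_{\max}$ by showing that for every $\pi\in S_n$ there is an admissible run of Algorithm~A --- that is, a legal sequence of diametral pairs chosen in step~A2 --- whose output $\beta$ satisfies $\dist_{\Gamma}(I,\pi)\le\beta$; since every such $\beta$ lies in $\mathcal{B}$ and hence is at most $\beta_{\max}$, maximizing over $\pi$ gives the bound. The run is built adaptively from $\pi$. As in the AK set-up, place marker $\pi(v)$ at each vertex $v$ of $T$, and maintain a current subtree $T'$ (initially $T$) with $|V(T')|\ge 3$, carrying a placement of the markers labelled by $V(T')$. At each stage choose vertices $i,j$ of $T'$ with $\dist_{T'}(i,j)=\diam(T')$ (exactly the choice allowed in step~A2), bring the marker labelled $i$ to vertex $i$ and the marker labelled $j$ to vertex $j$ using at most $2\diam(T')-1$ transpositions taken from $T'$, delete $i$ and $j$ from $T'$, and recurse on the residual placement on $T'-\{i,j\}$; when the process reaches a single vertex its marker is already home, and when it reaches two vertices they are adjacent and at most one transposition remains. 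Since diameter-realizing vertices of a tree are leaves, $T'-\{i,j\}$ is again a tree and the residual placement is again a genuine placement of its vertex labels; and since every transposition used after a vertex has been deleted lies in a subtree avoiding that vertex, no already-homed marker is ever disturbed. The per-stage costs then sum to at most the value that Algorithm~A outputs for the chosen sequence of pairs, so $\dist_{\Gamma}(I,\pi)\le\beta$ for that run.

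The step that makes the recursion run is the following claim: every tree $T'$ with $|V(T')|\ge 3$, under any placement of the markers $V(T')$, admits a diametral pair $\{i,j\}$ such that the markers labelled $i$ and $j$ can be brought home using at most $2\diam(T')-1$ transpositions along edges of $T'$. I would prove this by a routing argument. Write $D=\diam(T')$, and use two classical facts about trees: the vertices realizing $D$ are leaves, and any two diametral paths of a tree share a common vertex (they all pass through the center). Fix a diametral pair $\{i,j\}$ and let $a,b$ be the vertices currently holding markers $i$ and $j$. With a careful choice of which of the two markers to home first and of the routes used, one can send the markers to vertices $i$ and $j$ along (near-)shortest routes so that neither homing undoes the other --- the leaf property of $i$ and $j$ is what prevents a homed marker from being displaced later --- at a cost of roughly $\dist_{T'}(a,i)+\dist_{T'}(b,j)\le 2D$. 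The missing transposition is recovered by noting that if both of these distances are forced to equal $D$ then both routes are themselves diametral paths, hence share part of the $i$--$j$ path and are traversed along it in opposite directions; pipelining the two markers onto that shared segment then lets a single transposition advance \emph{both} markers one step toward their goals, bringing the total to $2D-1$. The freedom in step~A2 to choose \emph{which} diametral pair to use (together with the option of routing through the tree's center) is what guarantees such a cooperative configuration is always available.

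The main obstacle is making this last claim watertight for arbitrary trees: a complete case analysis is needed on the positions of the two extremal markers relative to the $i$--$j$ path --- including the degenerate situations in which one marker is already home, in which the two markers merely swap, and in which the two routes meet only at a vertex rather than along an edge --- verifying in each case both that a suitable diametral pair exists and that the routing attains the bound $2D-1$. When $T$ is a path or a star this analysis is routine and recovers the first inequality for those families (in line with Theorem~\ref{Thm:pathfTmax:equals:diam}); the general case is where the real work lies, and it is the step most likely to require restricting to the families for which the bound has been verified.
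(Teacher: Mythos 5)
Your skeleton matches the paper's: the right-hand inequality is dispatched exactly as you say via Theorem~\ref{beta:le:AKbound}, and the left-hand inequality is proved by exhibiting, for each $\pi$, an adaptive legal run of Algorithm~A in which each stage homes one diametral pair of the current subtree using at most $2\diam(T')-1$ transpositions, with deleted (leaf) vertices never revisited. The problem is that you leave the per-stage claim --- the entire content of the argument --- unproven: your ``pipelining'' mechanism, in which two diametral routes share a segment traversed in opposite directions so that one transposition advances both markers, is never made precise, and you end by conceding that the claim may only be salvageable for special families of trees. As written, that is a genuine gap, not a deferred routine verification.

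The paper closes this gap with a much simpler observation that needs no cooperative routing and no analysis of how the two routes intersect. Fix a diametral pair $\{i,j\}$ and split on $d:=\dist_{T'}(i,\pi^{-1}(i))$, the distance from vertex $i$ to the current location of marker $i$. If $d\le\diam(T')-1$, home marker $i$ first ($\le\diam(T')-1$ swaps; its target $i$ is a leaf, so the subsequent routing of marker $j$ never displaces it), then home marker $j$ ($\le\diam(T')$ swaps), for a total of $2\diam(T')-1$, and declare $\{i,j\}$ the pair chosen in step~A2. If instead $d=\diam(T')$, \emph{abandon the pair $\{i,j\}$}: the pair $\{i,\pi^{-1}(i)\}$ is itself diametral, hence a legal choice in step~A2, and both its members are leaves. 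Home marker $\pi^{-1}(i)$ first along a shortest path, costing at most $\diam(T')$ swaps; the last of these swaps acts on the unique edge at the leaf $\pi^{-1}(i)$ and therefore simultaneously pushes marker $i$ (which has been sitting undisturbed at vertex $\pi^{-1}(i)$) to the adjacent vertex, at distance exactly $\diam(T')-1$ from its home. Homing marker $i$ then costs at most $\diam(T')-1$ further swaps, again totalling $2\diam(T')-1$. So the ``missing transposition'' is recovered not by synchronizing two routes but by re-choosing the diametral pair so that the first marker's final swap is automatically a useful move for the second; with this substitution your argument becomes the paper's proof, and the restriction to special families you feared is unnecessary.
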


\begin{proof}
The second inequality has already been proved.  We now prove the first inequality.  Let $\pi \in S_n$.  Suppose that each vertex $k$ of $T$ initially has marker $\pi(k)$. It suffices to show that all markers can be homed to their vertices using at most $\beta_{\max}$ transpositions.

Consider the following procedure.  Pick any two vertices $i,j$ of $T$ that are a maximum distance apart. We consider two cases, depending on the distance in $T$ between vertex $i$ and the current location $\pi^{-1}(i)$ of the marker $i$:

 Suppose that the distance in $T$ between vertices $i$ and $\pi^{-1}(i)$ is at most $\diam(T)-1$.  Then marker $i$ can be homed using at most $\diam(T)-1$ transpositions.  And then, marker $j$ can be homed using at most $\diam(T)$ transpositions.  Hence, markers $i$ and $j$ can both be homed to leaf vertices $i$ and $j$, respectively, using at most $2 \diam(T)-1$ transpositions.  We now let $i_1=i$ and $j_1=j$.

Now consider the case where the distance in $T$ between vertices $i$ and $\pi^{-1}(i)$ is equal to $\diam(T)$.  Let $x$ be the unique vertex adjacent to $\pi^{-1}(i)$.  In the first sequence of steps, marker $\pi^{-1}(i)$ can be homed to vertex $\pi^{-1}(i)$ using at most $\diam(T)$ transpositions.  The last of these transpositions will home marker $\pi^{-1}(i)$ and place marker $i$ at vertex $x$, whose distance to $i$ is exactly $\diam(T)-1$.  In the second sequence of steps, marker $i$ can be homed to vertex $i$ using at most $\diam(T)-1$ transpositions.  Hence, using these two sequences of steps, markers $i$ and $\pi^{-1}(i)$ can both be homed using at most $2 \diam(T)-1$ transpositions.  We now let $i_1=i$ and $j_1=\pi^{-1}(i)$.

We now remove from $T$ the vertices $i_1$ and $j_1$, and repeat this procedure on $T-\{i_1,j_1\}$ to get another pair $\{i_2,j_2\}$.  Continuing in this manner until $T$ contains at most two vertices, we see that all markers can be homed using at most
$$ \left\{ \sum_{\ell=1}^r \left( 2 \dist_T(i_\ell,j_\ell)-1 \right)\right\} + \left[(n+1) \mod 2 \right]$$
transpositions.  This quantity is equal to the value $\beta$ returned by the Algorithm when it chooses $\{i_1,j_1\},\ldots,\{i_r,j_r\}$ as its vertex pairs during each iteration of step A2, and hence this quantity is at most $\beta_{\max}$.  Thus, $\dist_{\Gamma}(I,\pi) \le \beta_{\max}$ for all $\pi \in S_n$.
\end{proof}

We have shown that the maximum possible value returned by the algorithm, denoted by $\beta_{\max}$, is an upper bound on the diameter of the Cayley graph. An open problem is to determine whether {\em each} of the possible values returned by the algorithm is an upper bound on the diameter, i.e. whether $\beta$ is an upper bound on the diameter of the Cayley graph for each $\beta \in \mathcal{B}$. Our examples so far show that for many families of trees (in fact, for all the ones investigated so far), the value returned by the algorithm is an upper bound on the diameter. In other words, while we only showed that $\beta_{\max}$ is an upper bound on the diameter, it is certainly possible that in almost all cases any $\beta$ is also an upper bound on the diameter.

Note that our results imply that the value returned by Algorithm A is an upper bound on the diameter for all trees for which $|\mathcal{B}|=1$ since for such trees $\beta=\beta_{\max}$.  For such trees, our algorithm efficiently computes a value which is both an upper bound on the diameter as well as better than (or at least as good as) the previously known diameter upper bound.  An open question is to determine whether this is also the case for the remaining trees.  Trees for which $|\mathcal{B}| \ge 2$ are rare, and an open question is to determine whether almost all trees have $|\mathcal{B}|=1$.

We now discuss some properties of the algorithm.

Consider the transposition tree $T_1=\{(1,2),(2,3),(3,7),(7,8),(3,4),(4,5),(4,6)\}$.  If Algorithm A picks the sequence of vertex pairs during step A2 to be $\{1,8\},\{5,7\}$ and $\{2,6\}$, then the value returned by the algorithm is $\beta = 7+5+5+1=18$.  If Algorithm A picks the pairs to be $\{1,5\},\{6,8\}$ and $\{2.7\}$, then the value returned by the algorithm is still $\beta=7+7+3+1=18$.  In this example, the value returned by the algorithm is unique even though the subtrees $T-\{1,8\}$ and $T-\{1,5\}$ are non-isomorphic and even have different diameters.

Now consider the transposition tree $T_2=\{(1,2),(2,3),$ $(3,6),(3,4),(4,5),$ $(6,7),$ $(6,8),(6,9)\}$. If Algorithm A picks the sequence of vertex pairs during step A2 to be $\{1,5\},\{2,7\},\{4,8\}$ and $\{3,9\}$, then the value returned by the algorithm is $\beta=7+5+5+3=20$.  And if Algorithm A picks the vertex pairs to be $\{1,7\},\{5,8\},\{2,9\}$ and $\{4,6\}$, then the value returned by the algorithm is $\beta=7+7+5+3=22$.  Hence, the set of values returned by the algorithm $\mathcal{B} \supseteq \{20,22\}$.  A computer simulation shows the true diameter value of the Cayley graph generated by $T_2$ to be 18, and the diameter upper bound $f(T)$ evaluates to 22.

\begin{center}
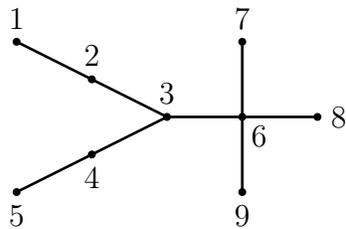
\begin{figure}
\begin{pspicture}(-3,0)(5,2)
\qdisk(0,2){\vs}
\qdisk(0,0){\vs}
\qdisk(1,1.5){\vs}
\qdisk(1,0.5){\vs}
\qdisk(2,1){\vs}
\qdisk(3,1){\vs}
\qdisk(3,0){\vs}
\qdisk(3,2){\vs}
\qdisk(4,1){\vs}
\psline[linewidth=\et]{-}(0,2)(1,1.5)
\psline[linewidth=\et]{-}(0,0)(1,0.5)
\psline[linewidth=\et]{-}(1,1.5)(2,1)
\psline[linewidth=\et]{-}(1,0.5)(2,1)
\psline[linewidth=\et]{-}(2,1)(3,1)
\psline[linewidth=\et]{-}(3,1)(3,2)
\psline[linewidth=\et]{-}(3,1)(3,0)
\psline[linewidth=\et]{-}(3,1)(4,1)
\uput[270](0,0){$5$}
\uput[90](0,2){$1$}
\uput[90](1,1.5){$2$}
\uput[270](1,0.5){$4$}
\uput[90](2,1){$3$}
\uput[315](3,1){$6$}
\uput[90](3,2){$7$}
\uput[0](4,1){$8$}
\uput[270](3,0){$9$}
\end{pspicture}
\caption{The transposition tree $T_2$ for which $|\mathcal{B}| \ge 2$.} \label{fig:tree:in:proof}
\end{figure}
\end{center}
In terms of the strictness of the diameter estimate computed by this algorithm, it is can be shown that for every $n$, there exists a tree on $n$ vertices, such that the difference between the value computed by this algorithm and the actual diameter value is, again, at least $n-4$.  The proof uses the same transposition tree constructed in the proof of the similar result for the strictness of the diameter upper bound, but unlike in the earlier proof, here it is short and immediate to establish that the (unique) value computed by the algorithm is ${n \choose 2}-2$.

These results raise some further questions.  Is it true that for all trees, the maximum possible value $\beta_{\max}$ returned by the algorithm is equal to the diameter upper bound?  Another open problem is to characterize those trees for which the value returned by the algorithm is unique, i.e. for which $|\mathcal{B}|=1$.  Another open problem is to classify those trees for which the sequences of subtrees generated by the algorithm are isomorphic (i.e. are independent of the choice of vertex pairs during step A2, unlike the tree $T_1$, as mentioned above).

\section{Concluding remarks}
A problem of much interest is to determine exact or approximate values for the diameter of various families of Cayley networks of permutation groups.  When a bound or algorithm for this diameter problem is proposed in the literature, it is of interest to know for which cases the bound is exact or how far away the bound can be from the actual value in the worst case. The diameter upper bound from \cite{Akers:Krishnamurthy:1989} was studied in this work.  This formula bounds the diameter of the Cayley graph on $n!$ vertices in terms of parameters of the underlying transposition tree. We showed above that this bound is sharp for all trees of minimum diameter and for all trees of maximum diameter, but can be strict for trees that are not extremal.  We also showed that for every $n >4$, there exists a tree on $n$ vertices such that the difference between the upper bound for the diameter of the Cayley graph on $n!$ vertices and its true diameter is at least $n-4$.

An open problem is to characterize (all) the remaining families of trees for which the diameter upper bound is sharp.  The $n-4$ lower bound for $\Delta_n$ given here is best possible and an open problem is to determine an upper bound for $\Delta_n$.

Evaluating the previously known upper bound on the diameter requires on the order of $n!$ (times a polynomial) computations.  We proposed an algorithm for the same problem which uses on the order of $n$ (times a polynomial) computations.  This was possible because we worked directly with the transposition tree on $n$ vertices, and so our algorithm does not require examining any of the permutations (it is only the proof of the algorithm that required examining the permutations).  As far as the accuracy of the diameter estimate of our algorithm, we showed that the value obtained by our algorithm is less than or equal to the previously known diameter upper bound.  For the families of trees investigated so far, the maximum possible value returned by our algorithm is exactly equal to the previously known diameter upper bound. However, our algorithm arrived at the same value using a very different (and also simpler and more efficient) method than the previously known diameter upper bound, and investigating the properties of this algorithm might lead to new insights on this problem.  There are many open problems and extensions on this algorithm and related bounds, some of which were discussed at the end of Section~\ref{sec:algorithm}.

\bibliographystyle{plain}
\bibliography{refscayley}

\end{document}